\newtheorem{theorem}{Theorem}
\newtheorem{proposition}{Proposition}
\newcommand{\dop}{_{{\rm D},l}}
\newcommand{\dopz}{_{{\rm D},0}}
\newcommand{\p}{\Omega}
\begin{document}
\title{Uplink Sensing in Perceptive Mobile Networks with Asynchronous Transceivers}
\author{{Zhitong Ni,~\IEEEmembership{Student~Member,~IEEE},
		J. Andrew Zhang,~\IEEEmembership{Senior~Member,~IEEE},\\
         Xiaojing Huang,~\IEEEmembership{Senior~Member,~IEEE},  Kai Yang, ~\IEEEmembership{Member,~IEEE}, and Jinhong Yuan,~\IEEEmembership{Fellow,~IEEE}}
\thanks{ Z. Ni and K. Yang are with the School of Information and Electronics, Beijing
Institute of Technology, Beijing 100081, China. Z. Ni is also with the Global Big
Data Technologies Centre, University of Technology Sydney, Sydney, NSW
2007, Australia (Emails: zhitong.ni@student.uts.edu.au, yangkai@ieee.org).}
\thanks{ J. A. Zhang and X. Huang are with the Global Big Data Technologies Centre, University of Technology Sydney, Sydney, NSW
2007, Australia (Emails:  Andrew.Zhang@uts.edu.au, Xiaojing.Huang@uts.edu.au).}	
\thanks{J. Yuan is with the University of New South Wales, Sydney, NSW
2052, Australia (Email: J.Yuan@unsw.edu.au).}
}	

\maketitle

\begin{abstract}
Perceptive mobile network (PMN) is a recently proposed next-generation network that integrates radar sensing into communication. One major challenge for realizing sensing in PMNs is how to deal with spatially-separated asynchronous transceivers. The asynchrony between sensing receiver and transmitter will cause both timing offsets (TOs) and carrier frequency offsets (CFOs) and lead to degraded sensing accuracy in both ranging and velocity measurements. In this paper, we propose an uplink sensing scheme for PMNs with asynchronous transceivers, targeting at resolving the sensing ambiguity and improving the sensing accuracy. We first adopt a cross-antenna cross-correlation (CACC) operation to remove the sensing ambiguity associated with both TOs and CFOs. Without sensing ambiguity, both actual propagation delay and actual Doppler frequency of multiple targets can be obtained using CACC outputs. To exploit the redundancy of the CACC outputs and reduce the complexity, we then propose a novel mirrored-MUSIC algorithm, which halves the number of unknown parameters to be estimated, for obtaining actual values of delays and Doppler frequencies. Finally, we propose a high-resolution angles-of-arrival (AoAs) estimation algorithm, which jointly processes all measurements from spatial, temporal, and frequency domains. The proposed AoAs estimation algorithm can achieve significantly higher estimation accuracy than that of using samples from the spatial domain only. We also derive the theoretical mean-square-error of the proposed algorithms. Numerical results are provided and validate the effectiveness of the proposed scheme.
\end{abstract}

\begin{IEEEkeywords}
Joint communication and radar sensing,  dual-functional radar-communications, uplink sensing, mirrored-MUSIC, perceptive mobile network,
\end{IEEEkeywords}

\section{Introduction}
The emerging joint communication and radar sensing (JCAS) techniques, aka, dual-functional radar-communications (DFRC), integrate communication and radar sensing functions into one system by sharing a single transmitted signal and many hardware and signal processing modules \cite{strum1,chiri17, kumari,andrew19,Luo19,liufan20}. The integration not only achieves immediate benefits of reduced size, power consumption, cost, and improved spectrum efficiency but also helps to establish a communication link using sensing information or vise versa \cite{radaraided}. The perceptive mobile network (PMN) \cite{framework17,lushan,lushan5G,lushanSvy} is a recently proposed next-generation mobile network based on the JCAS techniques. The concept of PMN was first introduced in \cite{framework17} and then elaborated in \cite{lushan}. Evolving from the current communication-only mobile network, PMN is expected to serve as a ubiquitous radar-sensing network, whilst providing uncompromising mobile communication services.

Although PMN and its systematic framework were introduced in \cite{framework17}, JCAS technologies have been actively studied in the past decade, particularly the technologies closely related to modern mobile networks. In \cite{lush11}, the orthogonal-frequency-division-multiplexing (OFDM) signal was used for sensing and communication simultaneously. Also using the OFDM signals, the authors in \cite{gudelay} developed a smoothing approach that jointly estimates the delay and Doppler frequency of targets moving at high speed. In the scenario of multiuser systems, the authors in \cite{strum13} proposed an interleaved OFDM signal model to mitigate the multiuser interference (MUI). The authors in \cite{sit} analyzed the MUI tolerance of a multiple-input multiple-out (MIMO) JCAS system in terms of the resulting radar signal-to-interference-plus-noise ratio (SINR), using the signal model  in \cite{strum13}.  In \cite{liumu}, the authors used separated antenna arrays to realize dual-function JCAS systems. A multi-objective function was further applied to trade off the similarity of the generated waveform to the desired one in \cite{liuweight}. It is noted that all these papers use a co-located DFRC transceiver, similar to a mono-static radar, and face an essential requirement of the full-duplex capability of the transceiver \cite{fullduplex}. Alternative solutions other than full-duplex transceiver exist but require changes to existing network infrastructure, e.g.,  the authors in \cite{niICC} used a synchronized single-antenna sensing receiver that is sufficiently separated from the transmitter.

Since the full-duplex technology is not quite mature, there exists an optional transceiver setup for realizing JCAS in PMNs, similar to a bi-static radar \cite{passive10}, where the sensing receiver is physically separated from the transmitter. This setup is consistent with the uplink sensing as defined in \cite{lushan}. Such a setup can be implemented with minimal network changes only and is a favorite option in the near term.
Some works have been done for realizing JCAS for the uplink channels \cite{andrea20,ICC20ws}.
In \cite{andrea20}, the authors investigated the uplink sensing in a 5G cellular network using  massive MIMO and coexisting with a radar in the same frequency band. The authors adopted broadband OFDM modulation and obtained the uplink channel via minimum-mean-squared-error (MMSE) or zero-forcing (ZF) based processing schemes. In \cite{ICC20ws}, the authors proposed a receiver architecture for DFRC systems and obtained its corresponding uplink communication channel capacity and radar channel capacity, respectively.
The main challenges for realizing this setup in PMNs are (1) the unavailability of clock-level synchronization between the sensing receiver and the transmitter; (2) the relatively low angle-of-arrival (AoA) estimation accuracy due to the rich multi-path environment in mobile networks \cite{lushan}.
Perfect synchronization was assumed in most recent papers about JCAS, whereas the asynchrony between the sensing receiver and the transmitter is not addressed yet. Some papers on cognitive radio have dealt with the asynchronous issues \cite{asyofdm,asyofdm16} but they are unrelated to JCAS. These works analyzed the interferences caused by asynchrony but did not provide an effective way for parameter estimation.
 Without clock-level synchronization between the sensing receiver and the transmitter, both timing offsets (TOs) and carrier frequency offsets (CFOs) can occur \cite{IndoTrack,widar2.0}, leading to sensing ambiguity and degraded accuracy in estimating delay and Doppler frequency of targets.

To handle the asynchronous transceivers, a limited number of works on passive WiFi sensing have been proposed based on a cross-antenna cross-correlation (CACC) method \cite{IndoTrack,widar2.0,asyJCAS}. The underlying principle of CACC is that TOs can be removed by computing the cross-correlation between signals of multiple receiving antennas and exploiting the same TO across multiple antennas in one device. In \cite{IndoTrack}, CACC is applied to resolve the AoA estimation problem for device-free human tracking with commodity WiFi devices. In \cite{widar2.0}, CACC is used to resolve the ranging estimation problem for passive human tracking using a single WiFi link. Unfortunately, there exists a derivative problem with the CACC method, i.e., the outputs from CACC contain mirrored unknown parameters. The mirrored parameters double the number of unknown parameters and also obscured the sign of Doppler frequencies, leading to a degraded sensing accuracy. The author in \cite{IndoTrack} proposed an add-minus suppression (AMS) method that suppresses the mirrored parameters and extracts the actual ones. However, the AMS method needs the power of static paths to be much stronger than that of the dynamic paths, otherwise the mirrored component is suppressed slightly in a rich multi-path environment.

To overcome the challenge of low AoA estimation accuracy in the physically-separated transceiver, techniques based on spatial smoothing and combining measurements in spatial and other domains have been proposed \cite{spotFi,high_reso,niTCOM19}. In \cite{spotFi}, the authors proposed a high-resolution two-dimension (2D) MUSIC estimator via spatial smoothing, to obtain accurate AoA estimates with a small scale of antennas.  In \cite{high_reso}, the authors jointly combined spatial and temporal measurements to obtain high-resolution AoA estimation. In \cite{niTCOM19}, the authors defined a spatial path filter that is used to separate signals from multiple propagation paths and obtained the AoAs via CACC outputs. All these methods are designed for narrowband systems and they do not have the capability of jointly estimating AoAs and other parameters such as delay and Doppler frequency.

In this paper, we propose a broadband uplink sensing scheme for PMNs with physically-separated asynchronous transceivers and OFDM modulation. There are two key novelties in our scheme. Firstly, since the mirrored components derived from the CACC outputs have not been addressed properly in prior works, we propose a mirrored-MUSIC algorithm that jointly processing the actual and mirrored parameters to overcome the mirrored parameter problem associated with CACC. Secondly, noting that the estimates of AoAs are not included in the proposed mirrored-MUSIC algorithm, we propose a high-resolution AoA estimation algorithm that can obtain high-resolution estimates of AoAs by combining spatial and other domain measurements. Our proposed scheme is applicable for practical scenarios requiring a single static user equipment (UE) and a line-of-sight (LOS) path between the static transmitter and the base station (BS). Our major contributions are summarized as follows.

\begin{itemize}
\item We provide a practical radar sensing scheme that can be implemented in mobile networks without the requirement of clock-level synchronization between the transmitter and receiver. By using a CACC method to mitigate the TOs and CFOs resulted from asynchrony, our scheme relaxes stringent clock-level synchronization between physically-separated transceivers. The mean-squared-error (MSE) between the CACC output and our desired signal also keeps at a minimum level.
\item We propose a mirrored-MUSIC algorithm to handle the mirrored outputs of the CACC method and a general problem where the test basis vectors show mirror symmetry. The algorithm, which halves the unknown variables without introducing any approximation, has lower complexity and better performances compared to conventional MUSIC. This algorithm can also be applied to many other applications, such as traditional harmonic retrieval problems with the sinusoidal modulated signals \cite{rHRP04,rHRP05}, ESPRIT \cite{high_reso}, and  the matrix pencil method \cite{2013reducing}.
\item We develop a high-resolution MUSIC-based AoA estimation algorithm that combines measurements from spatial, temporal, and frequency domains. This algorithm equivalently increases the samples in the spatial domain and hence significantly improves the resolution of AoAs, compared with estimating the AoAs in the spatial domain only. In particular, a new issue of ``ambiguity of basis vectors'' occurs when integrating multiple parameters into one domain measurement. Our algorithm resolves this critical issue in combining multi-domain measurements by selecting multiple peaks from the MUSIC outputs.
\end{itemize}

The rest of this paper is organized as follows. Section \ref{sec-system} introduces system and channel models. Section \ref{domi} presents the CACC method. Section \ref{Mirror} introduces the proposed mirrored-MUSIC algorithm for estimating the actual delays and Doppler frequencies. Section \ref{HAoA} presents the high-resolution AoA estimation scheme. Extensive simulation results are presented in Section \ref{SIM}. Finally, conclusions are drawn in Section \ref{sec-conc}.

Notations: $\rm\bf a$ denotes a vector, $\rm\bf A$ denotes a matrix, italic English letters like $N$ and lower-case Greek letters $\alpha$ are a scalar, $\angle a$ is the phase angle of complex value $a$. $|{\rm\bf A }|, {\rm\bf A }^T, {\rm\bf A }^H, {\rm\bf A }^\dag$ represent determinant value, transpose, conjugate transpose, pseudo inverse, respectively. We denote Frobenius norm of a matrix as $\|{\rm\bf A }\|_F$. We use ${\rm diag}(\alpha_1,\cdots,\alpha_k)$ to denote a diagonal matrix. $[{\rm\bf A }]_{ N}$ is the $N$th column of a matrix  and $[{\rm\bf A }]^{ N}$ is the $N$th row of a matrix.

\begin{figure}[t]
	\centering
	\includegraphics[width=0.7\columnwidth]{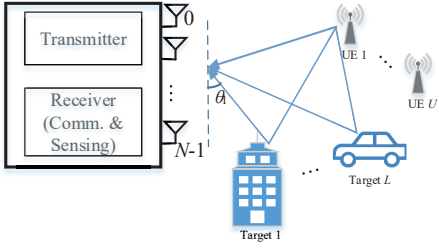}
	\caption{Illustration of the system model for uplink sensing.}
	\label{system}
\end{figure}

\section{System and Channel Models}\label{sec-system}
We consider the uplink communication and sensing in a PMN, as shown in Fig. \ref{system}. Multiple UEs communicate with a BS. The BS is physically static and uses received uplink signals for both communication and sensing. Each UE has one antenna and the BS has a limited number of $N$ antennas. Our proposed scheme in this paper requires the following setups:
\begin{itemize}
	\item The signals used for sensing are from a specific UE of which location is fixed and known to the BS.
	\item There is a LOS propagation path between the BS and the UE used for sensing. The power of the LOS path is much larger than that of non-LOS (NLOS) paths.
\end{itemize}
This setup is practically feasible for PMNs. The fixed UE can be a node that provides fixed broadband access in the mobile network. We can adopt the millimeter-wave frequency band to guarantee the dominating power of the LOS path.
In PMN, several types of signals may be used for sensing. Referring to the fifth-generation (5G) mobile network, they can be demodulation reference signals (DMRSs) that are specifically provided for channel estimation, synchronization signal blocks (SSBs), and even demodulated data symbols  \cite{DMRSliu,SSBpatent,ykls1,an2020}. Without loss of generality, we consider sensing via the uplink signal from a specific UE, denoted as UE 1.

At all UEs, we adopt a simplified packet structure, as shown in Fig. \ref{fig-sig}. In each packet, training symbols, denoted as preambles, are followed by a sequence of data symbols. OFDM modulation is applied across the whole packet. These data symbols can be empty if the packet is a DMRS. We only use the OFDM preambles for sensing. The preambles can also be used for synchronization and channel estimation for communications, which needs different processing at the BS. In this paper, we would like to use the preambles for sensing multiple targets in the PMN. The parameters of targets including the propagation delay, Doppler frequency, and AoA need to be obtained. Our proposed scheme can also be applied to other systems with similar signal structure, such as WiFi systems.

\begin{figure}[t]
	\centering
	\includegraphics[width=0.9\columnwidth]{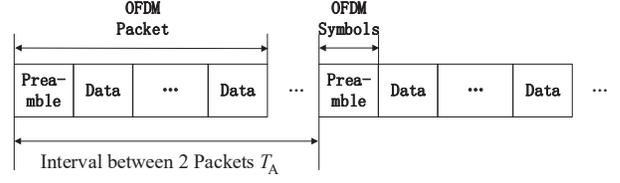}
	\caption{Illustration of transmitted OFDM packets at the UE baseband.}
	\label{fig-sig}
\end{figure}

Without losing generality, we assume each packet has only one preamble. For both preamble symbol and data symbol, each of them has $G$ subcarriers with a subcarrier interval of $1/T$, where $T$ denotes the length of an OFDM symbol. Each of the OFDM symbols is prepended by a cyclic prefix (CP) of period $T_{\rm C}$. Our scheme works if and only if a segment of subcarriers with an interleaved interval is available for UE 1. When multiple UEs communicate with the BS, each UE occupies a unique segment of subcarriers with the interleaved interval as in \cite{sit}. For notational simplicity, we assume that UE 1 occupies the whole preamble symbol here. Mathematically, the $m$th preamble symbol can be expressed as \cite{gudelay,strum13}
\begin{align}
s(t|m)= \sum\limits_{g=0}^{G-1} \exp\left(j2\pi g\frac tT\right){\rm rect}\left(\frac{t}{T+T_{\rm C}}\right)  x[m,g],
\end{align}
where $x[m, g]$ is a modulated symbol transmitted on the $g$th subcarrier of the $m$th preamble symbol and ${\rm rect}\left(\frac{t}{T+T_{\rm C}}\right)$ denotes a rectangular window of length $T+T_{\rm C}$.

The BS receives the preambles using a uniform linear array (ULA) of $N$ antennas.  The uplink channel between receiver at BS and the transmitter at UE 1 has $L$ NLOS paths reflected or refracted from $L$ targets, together with a dominating LOS path,  where the index of the LOS path is denoted as $l=0$. Let $\alpha_l$, $f\dop$, $\tau_l$ and $\theta_l$ denote the channel gain, the Doppler frequency, the propagation delay, and the AoA of the $l$th path, respectively. Due to the fixed locations of BS and UE 1, we assume that the parameters, $\tau_0$ and $\theta_0$, which correspond to the LOS path, are known at the BS, and $f\dopz$ is $0$. We also assume that $|\alpha_0|\gg|\alpha_l|, \forall l\in\{1,\cdots, L\}$. Note that the Doppler frequency of the $l$th path comes from the $l$th target of the channel, which can be either positive or negative depending on the moving directions.

We assume that $M$ packets are sent at the same interval, denoted as $T_{\rm A}$, at the UE baseband. Since there is typically no synchronization at clock level between BS and UE 1, the received signal has an unknown time-varying TO, denoted as $\delta_\tau(m)$, associated with the clock asynchrony, even if the packet level synchronization is achieved. Hence, the total time delay during signal propagation for the $l$th target as seen by BS equals  $\tau_l + \delta_\tau(m)$. In \cite{widar2.0}, it is shown that there also exists an unknown time-varying CFO due to the asynchronous carrier frequency, denoted as $\delta_f(m)$. The received time-domain signal corresponding to the preamble symbol in the $m$th packet can be represented as \cite{widar2.0}
\begin{align}\label{rt}
{\bf y}(t|m)=
&\sum\limits_{l=0}^{L}\alpha_le^{j2\pi m(T_{\rm A}+\delta_\tau(m)+\tau_l)(f\dop+\delta_f(m)) } \times\notag\\
&s(t-\tau_l-\delta_\tau(m)) {\bf a}(\p_l)+{\bf z}(t|m)\notag\\
\approx&\sum\limits_{l=0}^{L}\alpha_le^{j2\pi mT_{\rm A}(f\dop+\delta_f(m)) }\times\notag\\ &s(t-\tau_l-\delta_\tau(m)) {\bf a}(\p_l)+{\bf z}(t|m),
\end{align}
where the vector, ${\bf a}(\p_l)=\exp[j\p_l(0,1,\cdots, N-1)]^T$, is the array response vector of size $N\times 1$, with $\p_l$ being $\frac{2\pi d}{\lambda}\cos\theta_l$, $d$ denoting the antenna interval, $\lambda$  denoting the wavelength, and $\theta_l$ being the AoA from the $l$th target, and ${\bf z}(t|m)$ is a complex additive-white-Gaussian-noise (AWGN) vector with zero mean and variance of $\sigma^2$. TO is typically time-varying and has a random value that changes during any two discontinuous transmissions. The CFO may slowly vary over time. It is noted that TO and CFO are mixed with the actual propagation delay and the actual Doppler frequency, respectively. Hence they can directly cause ambiguity of ranging and velocity measurements. They also make the total delay and total Doppler frequency vary with time and prevent from aggregating signals for joint processing. It should be noted that, for the communication purpose, there is no need to distinguish the actual parameters with these offsets, since they can be estimated as a whole value and then be removed.  As for the radar sensing purpose,  these offsets have to be mitigated since  the range and the velocity of targets only depend on actual parameters.
Note that we use the approximation $e^{j2\pi m(T_{\rm A}+\delta_\tau(m)+\tau_l)(f\dop+\delta_f(m)) }\approx e^{j2\pi mT_{\rm A}(f\dop+\delta_f(m))}$, since the timing values of $(\delta_\tau(m)+\tau_l)$ are much smaller than $T_{\rm A}$ and $(f\dop+\delta_f(m))$ is also small in relation to the sampling rate.

After removing CP from the received time-domain signal, we then transform the signal into frequency domain via  $G$-point fast-Fourier-transform (FFT)'s.  Referring to \eqref{rt}, the received frequency-domain signal is
\begin{align}
y_n[m,g]=&\sum\limits_{l=0}^{L}\alpha_l e^{jn\p_l} e^{j2\pi mT_{\rm A}(f\dop+\delta_f(m))}
\times\notag\\
&e^{-j2\pi \frac{g}{T}(\tau_l+\delta_\tau(m))}x[m,g]+z_n[m,g],
\end{align}
where $y_n[m,g]$ is the received frequency-domain signal on the $g$th subcarrier at the $n$th receiving antenna of the $m$th OFDM preamble symbol, and $z_n[m,g]$ is a complex AWGN with zero mean and variance of $\sigma^2$. The actual value of $|x[m,g]|^2$ has insignificant impact on our proposed scheme. For simplicity, we assume $|x[m,g]|^2=1$.

\section{CACC for Mitigating TOs and CFOs}\label{domi}

As we mentioned in Section II, the actual delays and the actual Doppler frequencies are mixed with TOs and CFOs, respectively. In this section, we adopt and extend the CACC method to generate signals with TOs and CFOs being removed, in order to obtain the delay and the Doppler frequency of targets.

We decompose the received signals into three parts, i.e.,
\begin{align}
 y_n[m,g]=D_n[m,g]+I_n[m,g]+z_n[m,g],
\end{align}
where $D_n[m,g]$ denotes the received signal from the LOS path, given by
\begin{align}
D_n[m,g]=& \alpha_0 e^{jn\p_0}e^{j2\pi mT_{\rm A}\delta_f(m)}
 e^{-j2\pi \frac{g}{T}(\tau_0+\delta_\tau(m))}x[m,g],
\end{align}
and $I_n[m,g]$ is the received signals reflected or refracted from the targets, given by
\begin{align}
&I_n[m,g]\notag\\
=&\sum\limits_{l=1}^{L} \alpha_l e^{jn\p_l}e^{j2\pi mT_{\rm A}(f\dop+\delta_f(m))} 
e^{-j2\pi \frac{g}{T}(\tau_l+\delta_\tau(m))}x[m,g].
\end{align}

The CACC operation makes it possible to mitigate both TO and CFO. This operation computes the cross-correlation between different antennas and is generally used in estimating the AoA with hybrid subarrays \cite{niTCOM19}.
We select one antenna that has the largest received average power as \textit{reference antenna}. Without losing generality, we assume that the reference antenna is the $0$th antenna. Neglecting the noise term, the CACC operation between the $n$th antenna and the $0$th antenna  generates
\begin{align}
\rho_n[m,g]&=y_n[m,g]y_0^H[m,g]\notag\\
           &\approx(D_n[m,g]+I_n[m,g])(D_0^H[m,g]+I_0^H[m,g])\notag\\
           &\triangleq\rho_n^{(1)} +\rho_n^{(2)}[m,g]+\rho_n^{(3)}[m,g]+\rho_n^{(4)}[m,g],
\end{align}
where $y_0[m,g]$ is the received signal at the reference antenna,
$\rho_n^{(1)}[m,g] =D_n[m,g]D_0^H[m,g]$,
$\rho_n^{(2)}[m,g]=I_n[m,g]I_0^H[m,g]$, $\rho_n^{(3)}[m,g]=D_n[m,g]I_0^H[m,g]$, and $\rho_n^{(4)}[m,g]=I_n[m,g]D_0^H[m,g]$.
 For those CACC outputs, we have the following proposition.
\begin{proposition}\label{T1}
In $\rho_n[m,g]$, $\rho_n^{(1)}[m,g]$ is invariant with $m$ and $g$. The 2D-FFT output of  $\rho_n^{(3)}[m,g]+\rho_n^{(4)}[m,g]$ over $m$ and $g$ shows an impulsive shape that is centred around $\tau_l$ and $f\dop$. The power of $\rho_n^{(2)}[m,g]$ is significantly lower than the other three terms.
\end{proposition}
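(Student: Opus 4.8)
The plan is to analyze each of the four CACC components separately using the explicit frequency-domain expressions for $D_n[m,g]$ and $I_n[m,g]$ given above, together with the assumption $|x[m,g]|^2=1$.

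First, for $\rho_n^{(1)}[m,g] = D_n[m,g]D_0^H[m,g]$, I would substitute the expression for $D_n$, observing that $D_n$ and $D_0$ share identical Doppler/CFO phase terms $e^{j2\pi mT_{\rm A}\delta_f(m)}$ and identical delay/TO phase terms $e^{-j2\pi\frac{g}{T}(\tau_0+\delta_\tau(m))}$, as well as identical symbols $x[m,g]$. Multiplying $D_n$ by $D_0^H$ therefore cancels \emph{all} $m$- and $g$-dependent factors, leaving $|\alpha_0|^2 e^{jn\p_0}$, which is manifestly invariant in $m$ and $g$. This is the easy part.

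Second, for $\rho_n^{(3)}[m,g] + \rho_n^{(4)}[m,g] = D_n I_0^H + I_n D_0^H$, I would write out each term. In $D_n I_0^H$, the CFO factor $e^{j2\pi mT_{\rm A}\delta_f(m)}$ cancels (it appears with opposite signs in $D_n$ and $I_0^H$), as does the TO factor $e^{-j2\pi\frac{g}{T}\delta_\tau(m)}$ and the symbol magnitude $|x[m,g]|^2=1$; what survives is a clean term proportional to $\alpha_0\alpha_l^H e^{jn\p_0} e^{-j2\pi mT_{\rm A}f\dop} e^{+j2\pi\frac{g}{T}(\tau_l-\tau_0)}$ summed over $l$, i.e. a sum of pure 2D complex sinusoids in $(m,g)$ with frequencies $(-f\dop,\, (\tau_l-\tau_0)/T)$. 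Symmetrically, $I_n D_0^H$ yields sinusoids at $(+f\dop,\, -(\tau_l-\tau_0)/T)$. Taking the 2D-FFT over $m$ and $g$, each such sinusoid transforms to a (discrete) impulse — up to the usual Dirichlet-kernel sidelobes from finite $M,G$ — located at the corresponding delay/Doppler coordinates; after accounting for the known reference offset $\tau_0$ one recovers peaks centred at $\tau_l$ and $\pm f\dop$, establishing the ``impulsive shape'' and mirror-symmetry claim that motivates the later mirrored-MUSIC algorithm. The main subtlety here is being precise about the sign conventions and the fact that two \emph{mirrored} sets of impulses appear; I would also remark that the residual $\delta_\tau(m),\delta_f(m)$ dependence has genuinely vanished (not merely been approximated away), which is the whole point of CACC.

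Third, for $\rho_n^{(2)}[m,g] = I_n I_0^H$, I would note that this is a double sum over path indices $l, l' \in \{1,\dots,L\}$ of terms whose amplitude is $\alpha_l \alpha_{l'}^H$; by the standing assumption $|\alpha_0| \gg |\alpha_l|$ for all $l\ge 1$, every such cross term is $O(|\alpha_l|^2)$, whereas $\rho_n^{(1)}$ is $O(|\alpha_0|^2)$ and $\rho_n^{(3)},\rho_n^{(4)}$ are $O(|\alpha_0||\alpha_l|)$. Hence $\rho_n^{(2)}$ is smaller than each of the other three terms by at least a factor of order $|\alpha_l|/|\alpha_0| \ll 1$, which proves the power-dominance claim. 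I expect the only real obstacle is bookkeeping: keeping the four-way phase cancellations straight and stating the 2D-FFT ``impulse'' conclusion with the appropriate finite-length caveat (peak location exact, sidelobes present), rather than any deep technical difficulty — the result is essentially a structured computation exploiting that the LOS term carries the same TO/CFO as every NLOS term and hence self-cancels under cross-correlation.
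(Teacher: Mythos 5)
Your proposal is correct and follows essentially the same route as the paper's Appendix A: direct substitution shows $\rho_n^{(1)}=|\alpha_0|^2 e^{jn\p_0}$, the cross terms $\rho_n^{(3)}+\rho_n^{(4)}$ become TO/CFO-free 2D sinusoids whose 2D-FFT is a pair of mirrored Dirichlet-kernel impulses at $\pm(\pi T_{\rm A}f\dop,\pi(\tau_0-\tau_l)/T)$, and $\rho_n^{(2)}$ is weak because its amplitudes are $O(|\alpha_l\alpha_x^H|)$ versus $O(|\alpha_0|^2)$ and $O(|\alpha_0\alpha_l|)$ under $|\alpha_0|\gg|\alpha_l|$. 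The only cosmetic difference is that the paper further splits $\rho_n^{(2)}$ into an invariant diagonal part $\bar\rho^{(2)}$ and a variant part $\tilde\rho^{(2)}[m,g]$ (used later for the high-pass-filter error analysis), whereas your amplitude-ordering argument suffices for the power-dominance claim of the proposition itself.
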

\begin{proof}
The proof is provided in Appendix \ref{T1proof}.
\end{proof}
According to Proposition \ref{T1}, by using a 2D high-pass filter with respect to $m$ and $g$, we can remove the invariant component from the CACC outputs and obtain $\hat\xi_n[m,g]\approx\rho_n^{(3)}[m,g]+\rho_n^{(4)}[m,g]$. The cut-off frequency of the 2D high-pass filter depends on the bandwidth of $\hat\xi_n[m,g]$. From Appendix  \ref{T1proof}, the cut-off frequency of $\hat\xi_n[m,g]$ is $(\omega_f,\omega_\tau)=\left(\min\left|\pi T_{\rm A}f\dop\right|,\min(\pi(\tau_l-\tau_0)/T)\right)$.

From the expression of $\rho_n^{(3)}[m,g]+\rho_n^{(4)}[m,g]$, we note that  $\hat\xi_n[m,g]$ contains the actual sensing parameters without TOs and CFOs. The output from the high-pass filter can be represented as
\begin{align}\label{cros}
  \hat\xi_n[m,g]=  &  \rho_n[m,g]-\bar\rho_n \notag\\
  \approx& \rho_n^{(3)}[m,g]+\rho_n^{(4)}[m,g]\notag\\
  = & \sum\limits_{l=1}^L\alpha_0\alpha_l^He^{j2\pi mT_{\rm A}(-f\dop)}e^{-j\frac{2\pi g}{T}(\tau_0-\tau_l)}e^{jn\p_0}  +\notag\\
   &\sum\limits_{l=1}^{L}\alpha_l\alpha_0^He^{j2\pi mT_{\rm A}f\dop}e^{-j\frac{2\pi g}{T}(\tau_l-\tau_0)}e^{jn\p_l}\notag\\
  \triangleq &\xi_n[m,g],
\end{align}
where $\bar\rho_n$ is the low-pass component in $\rho_n[m,g]$.
Note that the delays in \eqref{cros} become relative values, i.e., $\tau_l-\tau_0$, $(0<\tau_0<\tau_l)$. Since $\tau_0$ is assumed to be known at the BS, the problem of estimating $\tau_l$ becomes how to estimate the relative delays.  The error between $\hat\xi_n[m,g]$ and $\xi_n[m,g]$ is resulted from $\rho_n^{(2)}[m,g]$ and hence not affected by the noise. We define this error as the \textit{input error}, i.e. $e^2=\mathbb E\|\hat\xi_n[m,g]-\xi_n[m,g]\|^2$.

In $\xi_n[m,g]$, $\rho_n^{(3)}[m,g]$ can be seen as the \textit{side product}, which is mixed with the \textit{actual component} of interest,  $\rho_n^{(4)}[m,g]$, since $\rho_n^{(4)}[m,g]$ already contains all sensing parameters. It would be redundant to estimate $\rho_n^{(3)}[m,g]$ and the estimation would require a doubled number of samples if conventional methods are used to estimate the actual component and the side product together.
One idea is to separate $\rho_n^{(3)}[m,g]$ from $\rho_n^{(4)}[m,g]$, which is a challenging task. The AMS method was proposed in \cite{widar2.0} and \cite{IndoTrack} to remove the side product. However, this method does not always work, particularly when the number  of   signal propagation paths is large in a rich multi-path propagation environment. In Appendix \ref{MSEofXI}, we briefly describe the AMS method and compare its input error with $\xi_n[m,g]$ that is adopted in our scheme. We show that our adopted $\xi_n[m,g]$ has a smaller input error than that of the AMS method.

\section{Mirrored-MUSIC for Estimating Propagation Delays and Doppler Frequencies}\label{Mirror}

In this section, we propose a mirrored-MUSIC algorithm that is tailored to directly estimating conjugated variables from signals similar to the one in \eqref{cros}.

MUSIC-based algorithms have been widely used for estimating different parameters of channels, including delay, Doppler frequency, and AoA \cite{passive10,widar2.0,high_reso}. With a given signal matrix, conventional MUSIC finds the formulation of basis vectors of the signal matrix. Utilizing the fact that the basis vectors fall into the null-space of the signal matrix, the parameter can be obtained by checking if the candidate basis vector with a testing parameter falls into the null-space of the signal matrix. Conventional MUSIC would construct $\xi_n[m,g]$ into a matrix. The matrix corresponding to $\xi_n[m,g]$ has a doubled number of parameters to be estimated.
We will exploit this redundancy, and the estimation of both delay and Doppler frequency will be transformed into an equivalent problem with halved unknown variables. This can achieve significantly improved performance compared to the conventional MUSIC algorithms.
\subsection{Proposed Mirrored-MUSIC Algorithm}
With integrating the side product and the actual component, we rewrite $\xi_n[m,g]$ to a general expression as
\begin{align}\label{simp}
\xi_n[m,g]=
\sum\limits_{l'=-L, l'\neq 0}^{L}P_{l'} e^{j m{\bar f}_{{\rm D},l'}}e^{-jg\bar\tau_{l'}} e^{jn\p_{l'}}.
\end{align}
In \eqref{simp}, the variables with indexes $l'<0$ represent those actual ones to be estimated and those with $l'>0$ belong to the side product, that is, $P_{l'}$ equals $\alpha_0\alpha_l^H $ and $\alpha_l\alpha_0^H$ when $l'>0$ and $l'<0$, respectively, ${\bar f}_{{\rm D},l'}=2\pi T_{\rm A}f\dop(-1)^{{\rm step}(l')}$ denotes the mirrored Doppler frequency,  $\bar\tau_{l'}=\frac{2\pi}T(\tau_0-\tau_l)(-1)^{{\rm step}(l')}$ denotes the mirrored delay, and $\p_{l'}$ equals $\p_0$ and $\p_l$ when $l'>0$ and $l'<0$, respectively. It is worth pointing out that only the terms of Doppler frequency and delay exhibit \textit{mirror symmetry}, i.e., the Doppler frequency and the delay of the side product are opposite to those of the actual component. The AoA terms in the side product have no such a property. By exploiting the mirror symmetry of both Doppler frequency and delay, we can reduce the number of estimates for delay  and Doppler frequency by half to $L$, respectively.

 Let us generate two types of mirrored signal vectors based on $\xi_n[m,g]$ by adding a vector with its reversed version. The two new vectors are given by
\begin{align}\label{pp}
 {\bf p}_n[m,g]
=& [\xi_n[m,g], \cdots, \xi_n[m+P,g] ]^T +\notag\\
&[\xi_n[m+P,g], \cdots, \xi_n[m,g] ]^T,
\end{align}
and
\begin{align}\label{qq}
 {\bf q}_n[m,g]
=&[\xi_n[m,g], \cdots, \xi_n[m,g+Q]]^T +\notag\\
&[\xi_n[m,g+Q], \cdots, \xi_n[m,g] ]^T,
\end{align}
where $P$ and $Q$ satisfy $L\leq P<M-L$ and $L\leq Q<G-L$, respectively, $P\in {\mathbb N}, Q\in {\mathbb N}$. The applied ranges of $P$ and $Q$ guarantee that there are at least $L$ mirrored vectors that are linearly independent of each other. The proposed mirrored vectors exhibit mirror symmetry too. It is clear to see that the $i$th entry of ${\bf p}[m,g]$ is the same with the $(P-i)$th entry and the $i$th entry of ${\bf q}[m,g]$ is the same with the $(Q-i)$th entry.
More importantly, we have the following theorem.
\begin{theorem}
The signal vectors, ${\bf p}_n[m,g]$ and ${\bf q}_n[m,g]$,  have only $L$ basis vectors, respectively.
\end{theorem}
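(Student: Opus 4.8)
The plan is to exploit the \emph{palindromic} (reversal-symmetric) construction of $\mathbf{p}_n[m,g]$ and $\mathbf{q}_n[m,g]$ in \eqref{pp}--\eqref{qq} together with the mirror relations $\bar f_{{\rm D},-l}=-\bar f_{{\rm D},l}$ and $\bar\tau_{-l}=-\bar\tau_l$ noted below \eqref{simp}: adding a signal vector to its own reversal fuses each conjugate pair of exponential modes into a \emph{single} fixed vector, so that the $2L$ modes present in $\xi_n[m,g]$ collapse to $L$.

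First I would treat $\mathbf{p}_n[m,g]$. Fixing $g,n$ and shifting the time index in \eqref{simp} gives $\xi_n[m+p,g]=\sum_{l'}c_{l',n}[m,g]\,e^{jp\bar f_{{\rm D},l'}}$ with $c_{l',n}[m,g]:=P_{l'}e^{jm\bar f_{{\rm D},l'}}e^{-jg\bar\tau_{l'}}e^{jn\p_{l'}}$. Writing $\mathbf{v}(\omega):=[1,e^{j\omega},\dots,e^{jP\omega}]^{T}$ and noting that reversing the entries of $[\xi_n[m,g],\dots,\xi_n[m+P,g]]^{T}$ turns $e^{jp\bar f_{{\rm D},l'}}$ into $e^{jP\bar f_{{\rm D},l'}}e^{-jp\bar f_{{\rm D},l'}}$, one obtains $\mathbf{p}_n[m,g]=\sum_{l'}c_{l',n}[m,g]\big(\mathbf{v}(\bar f_{{\rm D},l'})+e^{jP\bar f_{{\rm D},l'}}\mathbf{v}(-\bar f_{{\rm D},l'})\big)$. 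Next I would pair the index $l'=-l$ (an actual mode) with $l'=l$ (its side product), set $\beta_l:=\bar f_{{\rm D},-l}=2\pi T_{\rm A}f\dop$, and collect the $\mathbf{v}(\beta_l)$ and $\mathbf{v}(-\beta_l)$ terms: a short computation using $\bar f_{{\rm D},l}=-\beta_l$ shows that the coefficient of $\mathbf{v}(-\beta_l)$ equals $e^{jP\beta_l}$ times the coefficient of $\mathbf{v}(\beta_l)$, so the pair fuses and
\begin{align*}
\mathbf{p}_n[m,g]=\sum_{l=1}^{L}a_{l,n}[m,g]\,\mathbf{u}_P(\beta_l),\qquad \mathbf{u}_P(\beta):=\mathbf{v}(\beta)+e^{jP\beta}\mathbf{v}(-\beta),
\end{align*}
with $a_{l,n}[m,g]=c_{-l,n}[m,g]+e^{-jP\beta_l}c_{l,n}[m,g]$. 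Crucially, $\mathbf{u}_P(\beta_l)$ depends only on the Doppler $f\dop$ and on $P$ — the array phases $\p_{l'}$ and everything else are absorbed into the scalar $a_{l,n}[m,g]$ — so every $\mathbf{p}_n[m,g]$, over all $m,g,n$, lies in $\mathrm{span}\{\mathbf{u}_P(\beta_1),\dots,\mathbf{u}_P(\beta_L)\}$; hence the $\mathbf{p}_n[m,g]$ admit at most $L$ basis vectors.

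The identical argument on \eqref{qq}, with the subcarrier index shifted by $q$, the length $Q$ replacing $P$, and $\bar\tau_{-l}=-\bar\tau_l$ replacing the Doppler relation, yields $\mathbf{q}_n[m,g]=\sum_{l=1}^{L}b_{l,n}[m,g]\,\mathbf{u}_Q(\nu_l)$ with $\nu_l:=\bar\tau_{-l}$, so the $\mathbf{q}$-signal space also has at most $L$ basis vectors. To upgrade ``at most $L$'' to ``exactly $L$'' I would show that the $L$ mirrored vectors $\{\mathbf{u}_P(\beta_l)\}$ (resp. $\{\mathbf{u}_Q(\nu_l)\}$) are linearly independent: since $[\mathbf{u}_P(\beta)]_i=e^{ji\beta}+e^{j(P-i)\beta}\propto\cos\!\big((i-\tfrac{P}{2})\beta\big)$, these are samples of cosines at the distinct frequencies $\{\beta_l\}$, and a generalized-Vandermonde argument gives independence whenever the $f\dop$ have distinct magnitudes and $P$ is large enough — which is precisely why the ranges $L\le P<M-L$ and $L\le Q<G-L$ are imposed below \eqref{qq} (they are stated there to guarantee at least $L$ linearly independent mirrored vectors).

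The routine part is the coefficient bookkeeping that produces the fusion factor $e^{jP\beta_l}$ — equivalently, the observation that $\mathbf{p}_n[m,g]$ is palindromic and that $\mathbf{u}_P(\beta)$ is exactly the ``symmetrized'' Vandermonde mode at frequency $\beta$. The one genuinely delicate point is the ``exactly $L$'' claim: because the palindromic constraint already confines the $\mathbf{u}_P(\beta_l)$ to a subspace of dimension $\lfloor P/2\rfloor+1$, linear independence of $L$ of them forces the window lengths $P,Q$ to be taken suitably large relative to $L$ (and the delays and Doppler frequencies to be nondegenerate, in particular no symmetric pair $\beta_{l}=-\beta_{l'}$); this genericity/independence step is the one I would spell out most carefully.
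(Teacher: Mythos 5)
Your proof is correct and takes essentially the same route as the paper: your symmetrized mode $\mathbf{u}_P(\beta)=\mathbf{v}(\beta)+e^{jP\beta}\mathbf{v}(-\beta)$ is exactly the paper's mirrored basis vector ${\bf p}_0(\bar f_{{\rm D},l'})$ (and ${\bf p}_m(\bar f_{{\rm D},l'})=e^{jm\bar f_{{\rm D},l'}}\mathbf{u}_P(\bar f_{{\rm D},l'})$), while your coefficient-fusion step is precisely the paper's pair of identities ${\bf p}_m(\bar f_{{\rm D},-l'})={\bf p}_m(\bar f_{{\rm D},l'})e^{-j(2m+P)\bar f_{{\rm D},l'}}$ and ${\bf p}_{m+1}(\bar f_{{\rm D},l'})={\bf p}_m(\bar f_{{\rm D},l'})e^{j\bar f_{{\rm D},l'}}$, with the same argument repeated for ${\bf q}$. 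The only substantive difference is that you also sketch the linear-independence (``exactly $L$'') step via the cosine/Chebyshev--Vandermonde argument and correctly observe that the palindromic subspace has dimension $\lfloor P/2\rfloor+1$ and that opposite-sign Doppler pairs must be excluded---points the paper merely asserts through the condition $L\leq P<M-L$---so your added caveat is a genuine refinement rather than a departure in method.
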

\begin{proof}
We define two types of mirrored basis vectors as
\begin{align}\label{basis1}
{\bf p}_{m}(\bar f_{{\rm D},l'}) =&\left[e^{jm \bar f_{{\rm D},l'} }, \cdots, e^{j(m+P) \bar f_{{\rm D},l'} }\right]^T  +\notag\\
&\left[e^{j(m+P) \bar f_{{\rm D},l'} }, \cdots, e^{jm \bar f_{{\rm D},l'} } \right]^T,\notag\\
&l'\in\{\pm1,\cdots,\pm L\}, m\in\{0,\cdots,M-P-1\},
\end{align}
and
\begin{align}\label{basis2}
{\bf q}_g(\bar\tau_{l'}) =&\left[e^{-jg\bar\tau_{l'} }, \cdots, e^{-j(g+Q) \bar\tau_{l'}}       \right]^T +\notag\\
&\left[e^{-j(g+Q) \bar\tau_{l'}}, \cdots, e^{-jg \bar\tau_{l'} } \right]^T,\notag\\
&l'\in\{\pm1,\cdots,\pm L\}, g\in\{0,\cdots,G-Q-1\}.
\end{align}
It is noted that
\begin{align}
{\bf p}_n [m,g]=\sum\limits_{l'=-L,l'\neq0}^{  L}P_{l'}e^{-jg\bar\tau_{l'}}e^{jn\p_{l'}}{\bf p}_m(\bar f_{{\rm D},l'}),
\end{align}
and
\begin{align}
{\bf q}_n [m,g]   =\sum_{l'=-L,l'\neq0}^{ L}P_{l'}e^{jm\bar f_{{\rm D},l'}}e^{jn\p_{l'}}{\bf q}_g(\bar\tau_{l'}).
\end{align}
Hence, ${\bf p}_{m}(\bar f_{{\rm D},l'})$ and ${\bf q}_g(\bar\tau_{l'})$, $l'\in\{\pm 1, \cdots, \pm L\},$ are the basis vectors of ${\bf p}_n[m,g]$ and ${\bf q}_n[m,g]$, respectively.

Then, we need to prove that those mirrored basis vectors only span a space of rank $L$. For ${\bf p}_m(\bar f_{{\rm D},l'})$, we have
\begin{align}
{\bf p}_m(\bar f_{{\rm D},-l'}) =&\left[e^{-jm \bar f_{{\rm D},l'} }, \cdots, e^{-j(m+P) \bar f_{{\rm D},l'} } \right]^T+\notag\\
&\left[e^{-j(m+P) \bar f_{{\rm D},l'} }, \cdots, e^{-jm \bar f_{{\rm D},l'} } \right]^T\notag\\
            =& \left[e^{jm \bar f_{{\rm D},l'} }, \cdots, e^{j(m+P) \bar f_{{\rm D},l'} } \right]^Te^{-j(2m+P) \bar f_{{\rm D},l'}} +\notag\\
            &\left[e^{j(m+P) \bar f_{{\rm D},l'} }, \cdots, e^{jm \bar f_{{\rm D},l'} } \right]^Te^{-j(2m+P) \bar f_{{\rm D},l'}}\notag\\
            =&{\bf p}_m(\bar f_{{\rm D},l'}) e^{-j(2m+P) \bar f_{{\rm D},l'}}.
\end{align}
and
\begin{align}
{\bf p}_{m+1}(\bar f_{{\rm D},l'}) =& [e^{j(m+1) \bar f_{{\rm D},l'} }, \cdots, e^{j(m+1+P) \bar f_{{\rm D},l'} } ]^T+\notag\\
&[e^{j(m+1+P) \bar f_{{\rm D},l'} }, \cdots, e^{j(m+1) \bar f_{{\rm D},l'} } ]^T\notag\\
            &={\bf p}_m(\bar f_{{\rm D},l'})e^{ j \bar f_{{\rm D},l'}}.
\end{align}
Therefore, all vectors of ${\bf p}_m(\bar f_{{\rm D},-l'})$, $\forall m, \forall l'$, have only $L$ linearly independent basis vectors, i.e.,
$\{{\bf p}_0(\bar f_{{\rm D},l'})\}_{l'=1}^L$.
Likewise, for ${\bf q}_g(\bar\tau_{l'})$, we have
\begin{align}
{\bf q}_g(\bar\tau_{-l'})={\bf q}_{g,l'}e^{j(2g+Q)\bar\tau_{l'}},
\end{align}
and
\begin{align}
{\bf q}_{g+1}(\bar\tau_{l'})={\bf q}_{g,l'}e^{-j \bar\tau_{l'}}.
\end{align}
Therefore,  all vectors of ${\bf q}_g(\bar\tau_{l'})$, $\forall g, \forall l'$, have only $L$ linearly independent basis vectors, i.e., $\{ {\bf q}_g(\bar\tau_{l'})\}_{l'=1}^L$.
Overall,  $2L(M-P)$ ${\bf p}_{m,l'}$ span a space of rank $L$ and $2L(G-Q)$ ${\bf q}_{g,l'}$ span  a space of rank $L$.
\end{proof}

The class of MUSIC algorithms requires the formulation of basis vectors that can span the entire signal space. Directly using $\xi_n[m,g]$ to construct the signal matrix, as in the case of the conventional MUSIC, would require $2L$ basis vectors due to the side product. The proposed mirrored signal vectors have only $L$ basis vectors that can span the whole signal space.

For  ${\bf p}_n[m,g]$, $n\in\{1,\cdots,N-1\}$, $m\in\{0,\cdots, M-P-1\}$, and $g\in\{0,\cdots, G-1\}$, there are $G(N-1)(M-P)$ vectors in total. For  ${\bf q}_n[m,g]$, $n\in\{1,\cdots,N-1\}$, $m\in\{0,\cdots, M-1\}$, and $g\in\{0,\cdots, G-Q-1\}$, there are $M(N-1)(G-Q)$ vectors in total.
Due to the high computational complexity of singular value decomposition (SVD) in MUSIC, stacking all signal vectors into a matrix would lead to prohibitive complexity. Instead, we fix $n$ and $g$ as $n_0$ and $g_0$, respectively, and stack ${\bf p}_{n_0} [m,g_0]$, $m\in\{0,\cdots,M-P-1\}$, into a matrix of dimension $(P+1)\times (M-P)$, i.e.,
\begin{align}
{\bf P}=[{\bf p}_{n_0}[0,g_0], {\bf p}_{n_0}[1,g_0],\cdots, {\bf p}_{n_0} [M-P-1,g_0] ].
\end{align}
Neglecting the noise term, all column vectors in ${\bf P}$ can be expressed by $\left\{{\bf p}_0(\bar f_{{\rm D},l'})\right\}_{l'=1}^L$. Hence, the rank of ${\bf P}$ is $L$. Similarly, we can stack all ${\bf q}_{n_0} [m_0,g]$, $g\in\{0,\cdots,(G-Q-1)\}$, into a matrix, denoted as ${\bf Q}$, which is also of rank $L$. The optimal value of $n_0$ is demonstrated in Proposition 2. As for optimizing $m_0$ and $g_0$, we can use the signals with the largest received power on average.

The matrix ${\bf P}$ is only related to the Doppler frequency.  Let us perform the SVD of ${\bf P}$, i.e., ${\bf P}={\bf U}_{\rm P}{\bf E}_{\rm P}{\bf V}_{\rm P}^H$, where ${\bf E}_{\rm P}$ is an $L\times L$ diagonal matrix, ${\bf U}_{\rm P}$ is the left singular matrix, and ${\bf V}_{\rm P}$ is the right singular matrix. Denoting the null-space of ${\bf U}_{\rm P}$ as $\bar {\bf U}_{\rm P}$, we can then estimate the Doppler frequency via
\begin{align}\label{music1}
 {\rm Peak}^L\left(\frac1{\left\| {\bf p}^H_0( 2\pi T_{\rm A} f') \bar{\bf U}_{\rm P}\right\|_F^2}\right),
\end{align}
where ${\rm Peak}^L(\cdot)$ denotes the operation that takes $L$ estimates corresponding to the $L$ largest peak values of the function in the bracket, ${\bf p}_0( 2\pi T_{\rm A} f')$ has the same expression as \eqref{basis1} with $m=0$, and $f', f'\in\left(0,\frac{1}{T_{\rm A}}\right),$ is a quantized Doppler frequency for testing, with the interval between adjacent $f'$ being $\frac{1}{T_{\rm A}(P+1)}$. Note that $f'$ only needs to be tested from $0$ to $\frac{1}{T_{\rm A}}$ due to the mirror symmetry of the proposed basis vectors. Hence, the proposed mirrored-MUSIC algorithm can increase the estimation accuracy with the same quantization interval.

For ${\bf Q}={\bf U}_{\rm Q}{\bf E}_{\rm Q}{\bf V}^H_{\rm Q}$, denoting the null-space of ${\bf U}_{\rm Q}$ as $\bar{\bf U}_{\rm Q}$, we can estimate the relative delay, in parallel with estimating the Doppler frequency, via
\begin{align}\label{music2}
 {\rm Peak}^L\left( \frac1{\left\|{\bf q}_0^H\left(\frac{2\pi\tau'}T\right)\bar{\bf U}_{\rm Q}\right\|_F^2}\right),
\end{align}
where ${\bf q}_0 \left(\frac{2\pi\tau'}T\right)$ has the same expression as \eqref{basis2}, and $\tau'$, $\tau'\in(0, T)$, is a quantized delay for testing, with the interval between adjacent $\tau'$ being $\frac{T}{(Q+1)}$.
Similar to $f'$, $\tau'$ only needs to be tested from $0$ to $T$ due to the mirror symmetry.

To determine the number of targets, we can adopt existing algorithms such as the well-known minimum description length (MDL) method and the simplified one in \cite{MDL}, using the diagonal elements in the singular value matrix of both ${\bf P}$ and ${\bf Q}$.

\subsection{Pair Matching and Doppler Frequency's Sign Determination}

There are two problems yet to be solved, following the 'Peak' function in \eqref{music1} and \eqref{music2}. Firstly, the estimates of delays and Doppler frequencies are not automatically matched to one target. Hence, we need to make a pair for each estimate of delay with each estimate of Doppler frequency. Secondly, the sign of the estimate of Doppler frequency is yet to be determined since we only obtain the absolute values of Doppler frequency. The sign of delay does not need to be determined, since  $\bar\tau_l=\tau_l-\tau_0$ is larger than zero by default. Note that the two derivative problems also exist and they are even more challenging when a conventional MUSIC algorithm is applied, because conventional MUSIC would obtain two values for one Doppler frequency and it needs to determine the correct value from two individual estimates. We now solve these two problems based on the CACC outputs.

From \eqref{simp}, we see that the term of AoA does not have mirror symmetry. All AoAs of $\xi_n[m,g]$ equal $\p_0$ when $l'>0$. Since $\p_0$ is known to the sensing receiver (BS), we can utilize $\p_0$ to address the above-mentioned two problems. Due to the undetermined sign of Doppler frequency and the unmatched delay and Doppler frequency, there are $2L^2$  candidates, i.e.,
\begin{align}\label{pair}
 ({\hat f}_{{\rm D}, l_x}, \hat{\tau}_{l_y}), l_x\in\{\pm 1,\cdots, \pm L\}, l_y\in\{1,\cdots, L\},
\end{align}
where ${\hat f}_{{\rm D}, l_x}$ and $\hat\tau_{l_y}$ are candidates to be paired. The value of $\hat\tau_{l_y}$, which is larger than zero, is the estimate obtained from \eqref{music2}. The absolute value of ${\hat f}_{{\rm D}, l_x}$  is the estimate obtained from \eqref{music1}.
Note that there are two opposite values for one Doppler frequency  and only one of them matches the actual one. The actual one has the maximum combining gain in the following function that combines $\xi_n[m,g]$, i.e.,
\begin{align}\label{P_XI}
 &P_{\xi}(l_x,l_y)\notag\\
 =&\sum\limits_{m=0}^{M-1}\sum\limits_{g=0}^{G-1}\sum\limits_{n=0}^{N-1} \xi_n[m,g]e^{ jm2\pi T_{\rm A}\hat f_{{\rm D},l_x}-jg\frac{2\pi g}T(\hat{\tau}_{l_y}-\tau_0)-jn\p_0}.
\end{align}
We can  first select one  out of $2L^2$  candidates that maximizes the absolute value of  $P_{\xi}(l_x,l_y)$. Supposing that the selected index of the obtained pair is $(l_{x_0}, l_{y_0})$, we remove both this pair and its mirrored index, $(-l_{x_0}, l_{y_0})$ from the set of candidates. Meanwhile, the sign of Doppler frequency for $f_{{\rm D},l_{x_0}}$ is determined. After removing the pair of $(\pm l_{x_0}, l_{y_0})$, the number of candidates is reduced to $2(L-1)^2$, i.e., $l_x\in\{\pm 1,\cdots, \pm L\}, l_x \notin \{\pm l_{x_0}\}$, $l_y\in\{ 1,\cdots, L\}, l_y  \notin\{ l_{y_0}\}$. We then match the next pair of Doppler frequency and delay. Repeating the process $L$ times, we can  match Doppler frequency with delay and obtain the sign of Doppler frequency simultaneously.

The whole process of estimating Doppler frequency and delay using the CACC outputs is summarized in Algorithm \ref{Alg_1}.

\begin{algorithm}[t]
	\caption{Proposed Mirrored-MUSIC Estimation Algorithm}\label{Alg_1}
	\begin{algorithmic}[1]
		\STATE {\bf Input: } $\xi_n[m,g]$ and $\p_0$.
		\STATE {\bf Initialization:} $P\in {\mathbb N}$ and $Q\in {\mathbb N}$ satisfy $L\leq P<M-L$ and $L\leq Q<G-L$.  Candidates of quantized Doppler frequency and delay for testing are selected uniformly over $(0,\frac1{T_{\rm A}})$ and $(0, T)$, respectively.
		
		\STATE Generate ${\bf p}_n [m,g]$ and ${\bf q}_n [m,g]$ according to \eqref{pp} and \eqref{qq}.
		
		\STATE Assemble ${\bf p}_{n_0}[m,g_0]$ from $m=0$ to $m=(M-P-1)$ into matrix $\bf P$. Assemble ${\bf q}_{n_0}[m_0,g]$ from $g=0$ to $g=(G-Q-1)$ into matrix $\bf Q$.
		
		\STATE SVD: ${\bf P}={\bf U}_{\rm P}{\bf E}_{\rm P}{\bf V}^H_{\rm P}$ and  ${\bf Q}={\bf U}_{\rm Q}{\bf E}_{\rm Q}{\bf V}^H_{\rm Q}$.
		\STATE Denote the null-space of ${\bf U}_{\rm P}$ and ${\bf U}_{\rm Q}$ as $\bar{\bf U}_{\rm P}$ and $\bar{\bf U}_{\rm Q}$, respectively.
		
		\STATE Estimate $\{f\dop\}_{l=1}^L$ and $\{\tau_l\}_{l=1}^L$ via \eqref{music1} and \eqref{music2}, respectively. The estimates are denoted as $ \hat{f}\dop$ and $\tau_l$.
		
		\STATE Generate $2L^2$ candidates for Doppler frequency and delay according to \eqref{pair}, and obtain  $2L^2$ $P_\xi (l_x,l_y)$ according to \eqref{P_XI}.
       \FOR{$X=L:-1:1$}
            \STATE Select one out of $2X^2$ candidates with the maximal $|P_\xi (l_x,l_y)|$, with the selected index being $i_X$.
            \STATE Find $l_x $ and $l_y $ that correspond to $i_X$.
            \STATE Remove  $P_\xi (l_x,\cdot)$ and $P_\xi (\cdot,l_y)$.
       \ENDFOR
		
		\STATE {\bf Output:}   $\hat {f}\dop$ and $\hat {\tau}_l$.
	\end{algorithmic}
\end{algorithm}

\subsection{Performance Analysis}\label{Accuracy}
Since the MUSIC-based estimators are non-linear approaches, we analyze the performance of the proposed mirrored-MUSIC using the perturbation methods as in \cite{MUSICaly,hiso25}. The mirrored-MUSIC is based on CACC, which makes the analysis more challenging than that in \cite{MUSICaly,hiso25}. Without losing generality, we analyze the performance for estimating Doppler frequency, and the performance for estimating delay can be similarly derived.

We rewrite the signal block, ${\bf P}$, as ${\bf P}= {\bf P}_{\rm s}+ {\bf \Psi}$, where ${\bf P}_{\rm s}$ is the signal block that is composed of the signals of interest, and $ {\bf \Psi}$ is the perturbations of signal block resulted from the interference of the high-pass filter and the AWGN noise. Then, we can rewrite the signal null-space, $\bar{\bf U}_{\rm P}$, as $\bar{\bf U}_{\rm P}=\bar{\bf U}_{{\rm P}_{\rm s}}+\Delta \bar{\bf U}_{ \Psi}$, where $\bar{\bf U}_{{\rm P}_{\rm s}}$ is the null-space that corresponds to ${\bf P}_{\rm s}$ and $\Delta\bar{\bf U}_{ \Psi}$ is the related perturbations. The perturbations of the signal null-space can be approximately written as
\begin{align}\label{DU1}
\Delta\bar{\bf U}_{ \Psi}=-{\bf U}_{\rm P}{\bf E}_{\rm P}^{-1}{\bf V}_{\rm P}^H{\bf \Psi}^H\bar{\bf U}_{{\rm P}_{\rm s}}.
\end{align}
We define a null-spectrum function as $F(f, \bar{\bf U}_{\rm P}) = {\bf p}_t(f)^H\bar{\bf U}_{\rm P}\bar{\bf U}_{\rm P}^H{\bf p}_t(f)$, which is the denominator of the objective function in \eqref{music1}. It is noted that $F(f\dop, \bar{\bf U}_{{\rm P_s}}) =0$. When the interference and noise term are introduced, the objective function in \eqref{music1} is equivalent to finding $L$ estimates of $f\dop$, such that $F(\hat f\dop, \bar{\bf U}_{\rm P})$ approach to $0$. The error between $\hat f\dop$  and $f\dop$ is denoted as $\Delta f_l$, which determines the performance of the MUSIC algorithms. For notational simplicity, we drop the subscript $l$. At a high SNR,  $\Delta f$ can be obtained via the Newton method, i.e.,
\begin{align}\label{Df}
\Delta f  = &
 {\frac{\partial F(f, \bar{\bf U}_{\rm P})}{\partial f}\over \frac{\partial^2 F(f, \bar{\bf U}_{\rm P})}{\partial^2 f}} \triangleq \frac{F_1(f, \bar{\bf U}_{\rm P})}{F_2(f, \bar{\bf U}_{\rm P})}
 \approx  \frac{F_1(f, \bar{\bf U}_{\rm P_s})+\Delta F_1}{F_2(f, \bar{\bf U}_{\rm P_s})+\Delta F_2}\notag\\
 \approx &\frac{ \Delta F_1}{F_2(f, \bar{\bf U}_{\rm P_s}) }
 =        {{{\rm Re}\left[{\bf p}_0(f)^H\Delta \bar{\bf U}_{\Psi} \bar{\bf U}_{\rm P_s}^H {\bf p}^{(1)}_0(f) \right]}\over{{{\bf p}^{(1)}_0}(f)^H \bar{\bf U}_{\rm P_s} \bar{\bf U}_{\rm P_s}^H {{\bf p}^{(1)}_0}(f)} },
\end{align}
where $F_1(\cdot)$ and $F_2(\cdot)$ denote the first- and second-order derivatives with respect to $f$,  $\Delta F_1$ denotes the error between $F_1(f, \bar{\bf U}_{\rm P })$ and $F_1(f, \bar{\bf U}_{\rm P_s})$,   $\Delta F_2$ denotes the error between $F_2(f, \bar{\bf U}_{\rm P_s})$ and $F_2(f, \bar{\bf U}_{\rm P})$, and ${{\bf p}^{(1)}_0}(f)$ denotes the first order derivative of the basis vector with respect to $f$.  The derivation can be referred to \cite{MUSICaly}. Substituting \eqref{DU1} into \eqref{Df}, we simplify the expression of $\Delta f$ as
\begin{align}\label{DeltaF}
\Delta f  = \frac{{\rm Re}\left[{\bm \beta}(f)^H{\bf\Psi}^H{\bm \gamma}(f)\right]}{{\bf p}_0^{(1)}(f)^H{\bm \gamma}(f)},
\end{align}
where ${\bm \beta}(f)$ and ${\bm \gamma}(f)$ are vectors given by $-{\bf V}_{\rm P}{\bf E}_{\rm P}^{-1}{\bf U}^H_{\rm P}{\bf p}_0(f)$ and $\bar{\bf U}_{\rm P_s} \bar{\bf U}_{\rm P_s}^H {{\bf p}^{(1)}_0}(f)$, respectively.

Intuitively, an effective way to reducing the error of $\Delta f$ is to suppress the variance of ${\bf \Psi}$. When the entries of ${\bf \Psi}$ are independent Gaussian variables with zero mean and variance of $\sigma_{\Psi}^2$, the variance of $\Delta f$ is given by $\frac12\|{\bm \beta}(f)\|_F^2\|{\bm \gamma}(f)\|_F^2\sigma^2_{\Psi}$. However, due to the CACC operation, the entries of ${\bf \Psi}$ are not independent Gaussian variables. In Appendix \ref{VarPsi}, we analyze the variance of ${\Psi}$ and obtain the following proposition.
\begin{proposition}\label{P2}
The index of the optimal reference antenna used for forming the matrix of ${\bf P}$ and $\bf Q$, $n_0$, is obtained by minimizing $\left|\sum\limits_{l=0}^L|\alpha_l|^2e^{jn_0\p_l}\right|^2$.
\end{proposition}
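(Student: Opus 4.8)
The plan is to track how the additive perturbation $\bm{\Psi}$ in the block matrix $\bf P$ (and analogously $\bf Q$) depends on the choice of reference antenna index $n_0$, and then minimize its effective variance. Recall from Section~\ref{domi} that the error $\hat\xi_n[m,g]-\xi_n[m,g]$ comes from the dropped term $\rho_n^{(2)}[m,g]=I_n[m,g]I_0^H[m,g]$, together with the residual AWGN contribution after the CACC multiplication. Since the CACC operation multiplies $y_n[m,g]$ by $y_0^H[m,g]$, the reference antenna enters every entry of the perturbation through the conjugated reference-antenna signal $y_0^H[m,g]=\sum_l \alpha_l^H e^{-jn_0\p_l}(\cdots)$. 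First I would write out $\rho_n^{(2)}[m,g]$ and the noise cross-terms explicitly, isolating the factor that carries the $n_0$-dependence, namely a sum of the form $\sum_{l=0}^L \alpha_l e^{jn_0\p_l}$ times common (phase-only, unit-modulus) factors, so that the magnitude of that factor is exactly $\left|\sum_{l=0}^L |\alpha_l|^2 e^{jn_0\p_l}\right|$ after the appropriate pairing with the reference LOS/NLOS gains. This is the computation carried out in Appendix~\ref{VarPsi}, which I would invoke.

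Next I would compute the (co)variance of the entries of $\bm\Psi$ conditioned on $n_0$. The key observation is that all the $m$- and $g$-dependence of the $n_0$-carrying factor is through unit-modulus exponentials, so it does not affect second moments; what survives is precisely the squared modulus $\left|\sum_{l=0}^L |\alpha_l|^2 e^{jn_0\p_l}\right|^2$, multiplied by terms independent of $n_0$ (powers of $\alpha_l$, $\sigma^2$, and combinatorial counts from the filtering). Hence $\sigma_{\Psi}^2$, and therefore the variance of $\Delta f$ given in \eqref{DeltaF} via the factor $\tfrac12\|\bm\beta(f)\|_F^2\|\bm\gamma(f)\|_F^2\sigma_\Psi^2$, is monotonically increasing in that quantity. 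Minimizing the MSE over the free design parameter $n_0$ is therefore equivalent to
\begin{align}
n_0 = \argmin_{n\in\{0,\dots,N-1\}} \left|\sum\limits_{l=0}^{L}|\alpha_l|^2 e^{jn\p_l}\right|^2,
\end{align}
which is the claimed statement. I would also note the same factor appears identically in the perturbation of $\bf Q$, so one common $n_0$ is optimal for both delay and Doppler estimation.

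The main obstacle is the second step: justifying that the non-$n_0$ factors in the covariance of $\bm\Psi$ genuinely decouple from $n_0$, i.e. that $\bm\Psi$'s second-moment structure factors as (an $n_0$-independent positive matrix) times the scalar $\left|\sum_l|\alpha_l|^2e^{jn_0\p_l}\right|^2$. This requires care because $\bm\Psi$ mixes the residual self-interference term $\rho_n^{(2)}$, which has its own $n$-dependence through $e^{jn\p_l}$, with the noise cross-terms $D_0^H z_n$, $I_0^H z_n$, $z_0^H D_n$, etc.; one must check that after stacking into the Hankel-like block $\bf P$ and passing through the high-pass filter, the cross-correlations between distinct entries either vanish or carry the same $n_0$-factor, so that $\|\bm\gamma(f)\|_F^2$ and $\|\bm\beta(f)\|_F^2$ can be treated as $n_0$-free. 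I would handle this by showing the dominant term of $\bm\Psi$ (the one scaling with $|\alpha_0|$, hence with $y_0^H$ evaluated on the strong LOS path) is the LOS-reference cross term, whose $n_0$-dependence is exactly $\alpha_0^H e^{-jn_0\p_0}$ multiplied into $I_n - \text{(its low-pass part)}$; combined with the weaker NLOS-reference terms this assembles into the stated sum $\sum_{l=0}^L|\alpha_l|^2 e^{jn_0\p_l}$ once the power-weighting from $\mathbb{E}|\alpha_l|^2$-type averaging (or the deterministic $|\alpha_l|^2$ if gains are treated as fixed) is accounted for. Everything else is bookkeeping of $m,g,n$ indices, which I would relegate to the appendix rather than grind through here.
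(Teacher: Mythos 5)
Your overall strategy --- perturbation analysis of ${\bf \Psi}$, isolating its $n_0$-dependent part, and choosing $n_0$ to minimize the resulting variance of $\Delta f$ --- is the same route the paper takes, so the gap is not in the plan but in the mechanism you propose. First, you locate the $n_0$-dependence in the conjugated reference signal, writing $y_0^H[m,g]=\sum_l\alpha_l^He^{-jn_0\p_l}(\cdots)$. In the paper's setup the conjugated antenna in the CACC product is always antenna $0$ (fixed once by received power), whose array phase is $e^{j0\p_l}=1$; the $n_0$ of the proposition is the \emph{other}, non-conjugated antenna whose output $\xi_{n_0}[m,g_0]$ is stacked into ${\bf P}$ and ${\bf Q}$. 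Hence the phases $e^{jn_0\p_l}$ enter only through $y_{n_0}$, and your ``dominant LOS term $\alpha_0^He^{-jn_0\p_0}$ coming from $y_0^H$'' does not exist in this model. The quantity $\sum_{l=0}^{L}|\alpha_l|^2e^{jn_0\p_l}$ is exactly the low-pass (same-path) component $\rho_{n_0}^{(1)}+\bar\rho^{(2)}$ of the CACC output --- each path at antenna $n_0$ paired with itself at antenna $0$ --- and it is this residual DC term that ends up weighting the post-correlation noise; your ``appropriate pairing with the reference LOS/NLOS gains'' gestures at this identification but never establishes it, and it cannot be produced by $y_0^H$ alone.

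Second, your claimed factorization --- that the second-moment structure of ${\bf \Psi}$ is an $n_0$-independent positive matrix times the scalar $\left|\sum_{l=0}^{L}|\alpha_l|^2e^{jn_0\p_l}\right|^2$ --- is not what holds and would fail if taken literally. The correct structure is additive: ${\rm var}\big([{\bf\Psi}]_{p_i,m_j}\big)\approx 4\delta_\xi+4\left|\sum_{l=0}^{L}|\alpha_l|^2e^{jn_0\p_l}\right|^2\sigma^2$, where the cross-path residual $\tilde\rho_{n_0}^{(2)}$ contributes a variance $\delta_\xi$ that is independent of $n_0$ (its $e^{jn_0\p_l}$ factors are unit modulus and drop out of second moments), and only the noise-coupled term carries the factor $\delta_{n_0}=\left|\sum_{l}|\alpha_l|^2e^{jn_0\p_l}\right|^2$. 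Under your multiplicative claim, an $n_0$ nulling the sum would make the perturbation vanish entirely, which is false because the interference residual survives regardless of $n_0$. The argmin conclusion happens to coincide, but the step you yourself flag as the main obstacle is precisely where the argument must produce this additive split and identify $\delta_{n_0}$ with the low-pass CACC component; without that, the proposal does not yet prove the proposition.
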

\begin{proof}
See proof in Appendix \ref{VarPsi}.
\end{proof}
Proposition \ref{P2} indicates that a fixed index of receiving antenna can help suppress the variance of ${\bf \Psi}$ when performing the MUSIC estimation algorithms, and hence improve the sensing performance. It is unnecessary to integrate all $n$'s and $g$'s to perform the SVD of $\bf P$. We note that $\left|\sum\limits_{l=0}^L|\alpha_l|^2e^{jn_0\p_l}\right|^2$ is equivalent to the low-pass component  in the 2D-FFT of the CACC signals. One way to obtain the optimal $n_0$ is to select the $n_0$th 2D-FFT of $\xi_n[m,g]$, such that the low-pass component of the 2D-FFT of $\xi_n[m,g]$ has the minimum among all $n$'s. This proposition is as expected since only the high-pass components contain the information of interest.

\begin{table*}[t]
	\centering
	\caption{Comparison of Computational Complexity}
	\begin{tabular}{ll}
		\bottomrule
		\multicolumn{2}{c}{The proposed Algorithm 1}\\
		\hline
		Operation &  Complexity  \\
		Conduct SVD of ${\bf P}$  & $\mathcal O\big((P+1)^2(M-P)\big)$ \\
		Obtain the objective function in \eqref{music1}    &  $\mathcal O\big(F_X(P+1)(M-P)\big)$\\
		Overall & $\mathcal O\big((P+1)^2(M-P)\big)$\\
		\bottomrule
		\multicolumn{2}{c}{Conventional MUSIC}\\
		\hline
		Operation &  Complexity  \\
		Conduct SVD &  $\mathcal O\big((P+1)^2(M-P)\big)$ \\
		Obtain the objective function   &  $\mathcal O\big(2F_X(P+1)(M-P)\big)$\\
		Overall &   $\mathcal O\big(2F_X(P+1)(M-P)\big)$\\
		\bottomrule
	\end{tabular}
	\label{Complex}
\end{table*}

\subsection{Complexity Analysis}
In this subsection, we analyze the computational complexity of Algorithm 1. Note that the delay and the Doppler frequency are estimated in parallel. Without losing generality, we only analyze the complexity of estimating Doppler frequency.  One main computation in Algorithm 1 is the SVD of ${\bf P}$  with the dimension of $(P+1)\times(M-P)$.  Since we only need the left singular matrices of ${\bf P}$, the complexities for obtaining ${\bf U}_{\rm P}$  is $\mathcal O\big((P+1)^2(M-P)\big)$. Another main computation is obtaining the objective function in \eqref{music1}. Given $F_X$ candidates of $f'$, obtaining the objective function in \eqref{music1} has a complexity of $\big(\mathcal O(F_X(P+1)(M-P))+\mathcal O(F_X(M-P))\big)\approx\mathcal O\big(F_X(P+1)(M-P)\big)$.
Other steps in Algorithm 1 have much lower complexity and can be omitted. Generally, the number of candidates, $F_X$, should be no greater than $P+1$. Hence,  the overall complexity for estimating Doppler frequency is $\mathcal O\big((P+1)^2(M-P)\big)$. Likewise,  the complexity for estimating delay is $\mathcal O\big((Q+1)^2(G-Q)\big)$. The complexities of the main steps of Algorithm 1 are summarized in Table I and are compared with those of the conventional MUSIC method.
From Table I, we note that the overall complexity of conventional MUSIC doubles our proposed mirrored-MUSIC, which is because conventional MUSIC requires a doubled number of candidates.

\section{High Resolution AoAs Estimation}\label{HAoA}

In Algorithm 1, the NLOS AoAs are still unknown but necessary for locating the targets. When the number of spatial samples is large, AoAs can be estimated directly using one-shot measurements in the spatial domain, which can be done in parallel with estimating  Doppler frequency and delay  \cite{gl,lushan5G}. When the number of antennas at the BS is small, such as the JCAS system setup in \cite{lushan}, the one-shot measurements are  insufficient for achieving accurate AoA estimation. In this section, we propose a high-resolution AoA estimation algorithm by combining measurements from the spatial and other domains.

S. Chuang et al. proposed a high-resolution AoA estimation method by using both time-domain and spatial-domain measurements \cite{high_reso}. However, their method is only applicable to narrowband systems and there exists a problem in the broadband scenario that some AoAs estimates would be missing if multiple Doppler frequencies (or delays) are close to each other. We will analyze the reason for this problem and obtain more accurate AoA estimates using measurements from all three domains in time, frequency, and space.

Using measurements in the spatial domain can distinguish among $N$ AoAs only. We attempt to equivalently enlarge the length of the spatial array response vectors by integrating both the time-domain and the frequency-domain signals into the spatial domain. Since all other parameters except NLOS AoAs have been obtained from Algorithm 1, the integrated signal vector only varies with NLOS AoAs. Otherwise, without the results from Algorithm 1, all parameters are mixed together and can be difficult to be obtained simultaneously. In the following of this section, we assume that delays and Doppler frequencies are already obtained at the BS. Still using $\xi_n[m,g]$, we generate a spatial signal vector as
\begin{align}\label{xxx}
{\bf c}[m,g] =&[\xi_1[m,g], \cdots,\xi_{N-1}[m,g] ]^T\notag\\
=&\sum_{l'=-L,l\neq0}^{L}P_{l'}{\bf a}(\p_{l'})e^{j m \bar f_{{\rm D},l'}}e^{-jg\bar\tau_{l'}},
\end{align}
where ${\bf a}(\p_{l'})=\exp[j\p_{l'}(1,\cdots, N-1)]^T$ is the $(N-1)\times1$ array response vector. Note that the length of the array response vector is reduced to $N-1$ due to the CACC operation. We integrate the spatial domain with the other two domains to enlarge the dimension of array response vectors. This can be realized using the following proposed matrix, i.e.,
\begin{align}\label{bary}
{\bf C}'[m,g]&= \left[\begin{array}{cc}
{\bf c}[m,g] & {\bf c}[m,g]\\
{\bf c}[m,g+1]&{\bf c}[m+1,g]\\
\vdots&\vdots\\
{\bf c}[m,g+C-1]&{\bf c}[m+C-1,g]\\
\end{array}\right],
\end{align}
where  $C, C\in {\mathbb N},$ satisfies $4L/(N-1)< C< \min (G-4L, M-4L)$. The dimension of ${\bf C}'[m,g]$ is $C(N-1)\times 2$. The first column of ${\bf C}'[m,g]$ is an enlarged vector corresponding to the spatial (angle) domain and the frequency (delay) domain.  The second column is an enlarged vector corresponding to the spatial (angle) domain and the time (Doppler frequency) domain.
In Appendix \ref{AppB}, we show that the basis vector for the first column of ${\bf C}'[m,g]$ is given by
\begin{align}\label{bac1}
{\bf c}^1_{l'}=\left[\begin{array}{c}
{\bf a}(\p_{l'}) e^{-j 0 \bar\tau_{l'}}\\
{\bf a}(\p_{l'}) e^{-j 1 \bar\tau_{l'}}\\
\vdots\\
{\bf a}(\p_{l'}) e^{-j (C-1) \bar\tau_{l'}}\\
\end{array}\right].
\end{align}
Likewise, the basis vector for the second column of ${\bf C}'[m,g]$, denoted as ${\bf c}^2_{l'}$,  has the same expression as \eqref{bac1} with replacing $\bar\tau_{l'}$ by $\bar f_{{\rm D},l'}$.

It is noted that the AoA parameter in \eqref{bary} is mixed with other known parameters. The other two parameters, i.e., delay and Doppler frequency, are estimated and already available. These parameters are used to enlarge the dimension of the basis vector. We stack ${\bf C}'[m,g]$ with different $m$ and $g$ into a bigger matrix,
\begin{align}\label{AssmC}
 {\bf C}=[{\bf C}'[0,0],{\bf C}'[1,1],\cdots,{\bf C}'[C_1,C_1]],
\end{align}
where $C_1, C_1\in\mathbb N,$ satisfies $C+C_1<\min(M,G).$
The basis vectors of columns of $\bf C$ are given by $ {\bf c}_{l'}^1$ and ${\bf c}_{l'}^2$, with $l'\in\{\pm1,\cdots,\pm L\}$. Hence, the rank of ${\bf C}$ is $4L$.
It is worth noting that those $4L$ basis vectors only correspond to $(L+1)$ AoAs, i.e., one LOS AoA and $L$ NLOS AoAs. The $2L$ basis vectors with ${l'}>0$ all correspond to $\Omega_0$, which is already known at the BS. For the other $2L$ basis vectors with $l'<0$, both ${\bf c}_{l'}^1$  and $ {\bf c}_{l'}^2$  correspond to one AoA.  Hence, it is not necessary to test the basis vectors with ${l'}>0$.

The AoAs can be obtained by performing the SVD of ${\bf C}$. In general, one pair of basis vectors, ${\bf c}_{l'}^1$ and ${\bf c}_{l'}^2$, corresponds to one AoA. By using the total $2L$ basis vectors, $l'\in\{1,\cdots,L\},$ we can obtain $L$ AoAs corresponding to $L$ NLOS targets. However, we notice an issue here: when the values of delays and  Doppler frequencies of different targets are close to each other, respectively, different basis vectors with similar delays and Doppler frequencies can be responded to multiple different AoAs. This would directly cause that some estimates of $\Omega_l$ with a lower gain will be missing from the estimated outputs. We call this issue as the ``ambiguity of basis vectors''. The method in \cite{high_reso} is proposed in the narrowband scenario and did not address the ambiguity of basis vectors.

Let us denote the SVD of ${\bf C}$ as ${\bf C}={\bf U}_{\rm C}{\bf E}_{\rm C}{\bf V}^H_{\rm C}$,
where ${\bf E}_{\rm C}$ is an $4L\times 4L$ diagonal matrix with the $4L$ largest entries, $ {\bf U}_{\rm C}$ is the left singular matrix of ${\bf C}$, and $ {\bf V}_{\rm C}$ is the right singular matrix of ${\bf C}$. We choose multiple peaks to address the ambiguity of basis vectors, i.e.,
\begin{align}\label{peak1}
{\rm Peak}^{X_{l'}}\left( \frac1 {\|\left[{\bf c}_{t}^1(\p',l'),{\bf c}_{t}^2(\p',l')\right]^H\bar{\bf U}_{\rm C} \|_F^2}\right), l'<0,
\end{align}
where $\bar{\bf U}_{\rm C}$ is the null-space of ${\bf U}_{\rm C}$,  $X_{l'}$ is the number of peaks in the objective function, and ${\bf c}_{t}^1(\p',l')$ and ${\bf c}^2_{t}(\p',l')$ are the testing basis vectors that have the same expressions as ${\bf c}^1_{l'}$ and ${\bf c}^2_{l'}$, respectively, after replacing $\p_{l'}$ by a testing AoA candidate,  ${\p'}$. The number of peaks is determined by how many peaks that are larger than the half of the maximal objective function. Note that the equivalent spatial dimension is extended to $C(N-1)$, hence the maximum detectable number of AoAs is $C(N-1)$. If $X_{l'}>1$, we need to retain one AoA and discard the rest. The determined AoAs with $X_{l'}=1$ form a set of $\{\p_S\}$. Supposing that the undetermined AoAs with $X_{l'}>1$ are $\hat \p_1,\cdots, \hat \p_{X_{l'}}$, we select the one with the highest value of \eqref{peak1}, dented as $\hat \p_1$, if $|\hat \p_1- \p_S|>\frac{2\pi}{C(N-1)}$. Otherwise, we select the one with a second highest value.

\begin{algorithm}[t]
	\caption{Proposed High Resolution AoA Estimation}\label{Alg_2}
	\begin{algorithmic}[1]
    \STATE {\bf Input: } $\hat{\bar \tau}_{l}$ and $\xi_n [m,g]$.
     \STATE {\bf Initialization: } $C\in {\mathbb N}$ satisfies $4L/(N-1)<C< \min(G-4L,M-4L)$. $C_1\in\mathbb N$ satisfies $C+C_1<\min(M,G).$ The AoA candidate is $\p'$ that goes through the entire range of $(-\pi,\pi)$.
     \STATE Generate ${\bf c}[m,g]$ according to \eqref{xxx}.

    \STATE Generate ${\bf C}'[m,g]$ according to \eqref{bary}.
    \STATE Generate ${\bf C}$ according to \eqref{AssmC}.

     \STATE SVD: ${\bf C}={\bf U}_{\rm C}{\bf E}_{\rm C}{\bf V}^H_{\rm C}$, where the dimension of ${\bf E}_{\rm C}$ is $4L\times 4L$.
     \STATE Denote the null-space of ${\bf U}_{\rm C}$ as $\bar{\bf U}_{\rm C}$.
     \STATE  Estimate $\{\p_l\}_{l=1}^L$  according to \eqref{peak1}.

    \STATE {\bf Output:}   $\hat {\p}_l$.
	\end{algorithmic}
\end{algorithm}

Since Algorithm \ref{Alg_2} is also based on MUSIC algorithms, the analysis in Section \ref{Accuracy} can be applied to obtaining the theoretical error of AoA estimates. Similar to \eqref{DeltaF}, letting ${\bf C}={\bf C}_s+{\bf \Psi}'$ and $\bar{\bf U}_{\rm C}=\bar{\bf U}_{\rm C_s}+{\Delta \bf U}_{\rm\Psi'}$, we can obtain the error of AoA estimates error as
\begin{align}
\Delta \p  = \frac{{\rm Re}\left[{\bm \beta}'(\p)^H{\bf\Psi}'^H{\bm \gamma}'(\p)\right]}{{\bf c}_t^{(1)}(\p,l')^H{\bm \gamma}'(\p)},
\end{align}
where ${\bf c}_t(\p,l')=[{\bf c}_{t}^1(\p,l'),{\bf c}^2_{t}(\p,l')]$, ${\bf c}_t^{(1)}(\p,l')$ is the first order derivative of ${\bf c}_t(\p,l')$ with respect to $\p$, ${\bm \beta}'(\p)$ and ${\bm \gamma}'(\p)$ are vectors that are written as $-{\bf V}_{\rm C}{\bf E}_{\rm C}^{-1}{\bf U}^H_{\rm C}{\bf c}_t(\p,l')$ and $\bar{\bf U}_{\rm C_s} \bar{\bf U}_{\rm C_s}^H {{\bf c}^{(1)}_t}(\p,l')$, respectively.

Now, we analyze the computational complexity of Algorithm 2. One main computation in Algorithm 2 is the SVD of ${\bf C}$  with the dimension of $C(N-1)\times 2C_1$.  Since we only need the left singular matrix of ${\bf C}$, the complexity for obtaining ${\bf U}_{\rm C}$  is $\mathcal O\big(2C^2(N-1)^2  C_1\big)$. Another main computation is involved in obtaining the objective function in (32). Given $C_X$ candidates of $\Omega'$, obtaining the objective function has a complexity of $\big(\mathcal O (C_XC(N-1)2C_1)+\mathcal O(2C_XC_1)\big)\approx\mathcal O\big(2C_XC(N-1)C_1\big)$.  Other steps in Algorithm 2 have much lower complexity and can be omitted. Generally, the number of candidates should be no greater than $C(N-1)$. Hence,  the overall complexity for estimating Doppler frequency is $\mathcal O\big(2C^2(N-1)^2 C_1\big)$.

\section{Simulation Results}\label{SIM}
In this section, we provide simulation results to validate the proposed scheme. The carrier frequency is $3$ GHz. The number of subcarriers is $G=256$. The frequency bandwidth is $128$ MHz. Hence, the OFDM symbol period $T$ is $2$ $u$s. The propagation delay is randomly distributed over $[0,0.4]$ $u$s. The length of CP is $D = 50$ to avoid inter-symbol interference (ISI) and the CP period $T_{\rm C}$ is about $0.4$ $u$s. The approximate interval between two packets, $T_{\rm A}$, is 1 ms. We use the preamble in $M=128$ packets for sensing parameter estimation. The velocities of targets range from -30 mps to 30 mps, and the Doppler frequency is randomly distributed over $[-0.3,0.3]$ KHz.   The AoAs of targets are random values uniformly distributed from $0$ to $ \pi$. All the targets are modeled as point sources, and the radar cross-sections are assumed to be 1. The BS employs a ULA with $N=4$ antenna elements. Unless stating otherwise, we assume that there is one LOS path and $L=3$ NLOS paths that are reflected or refracted from 3 targets. The power of the LOS path is assumed to be $10$ dB higher than those of the NLOS paths.

\begin{figure}[t]
	\centering
    \includegraphics[width=0.9\columnwidth]{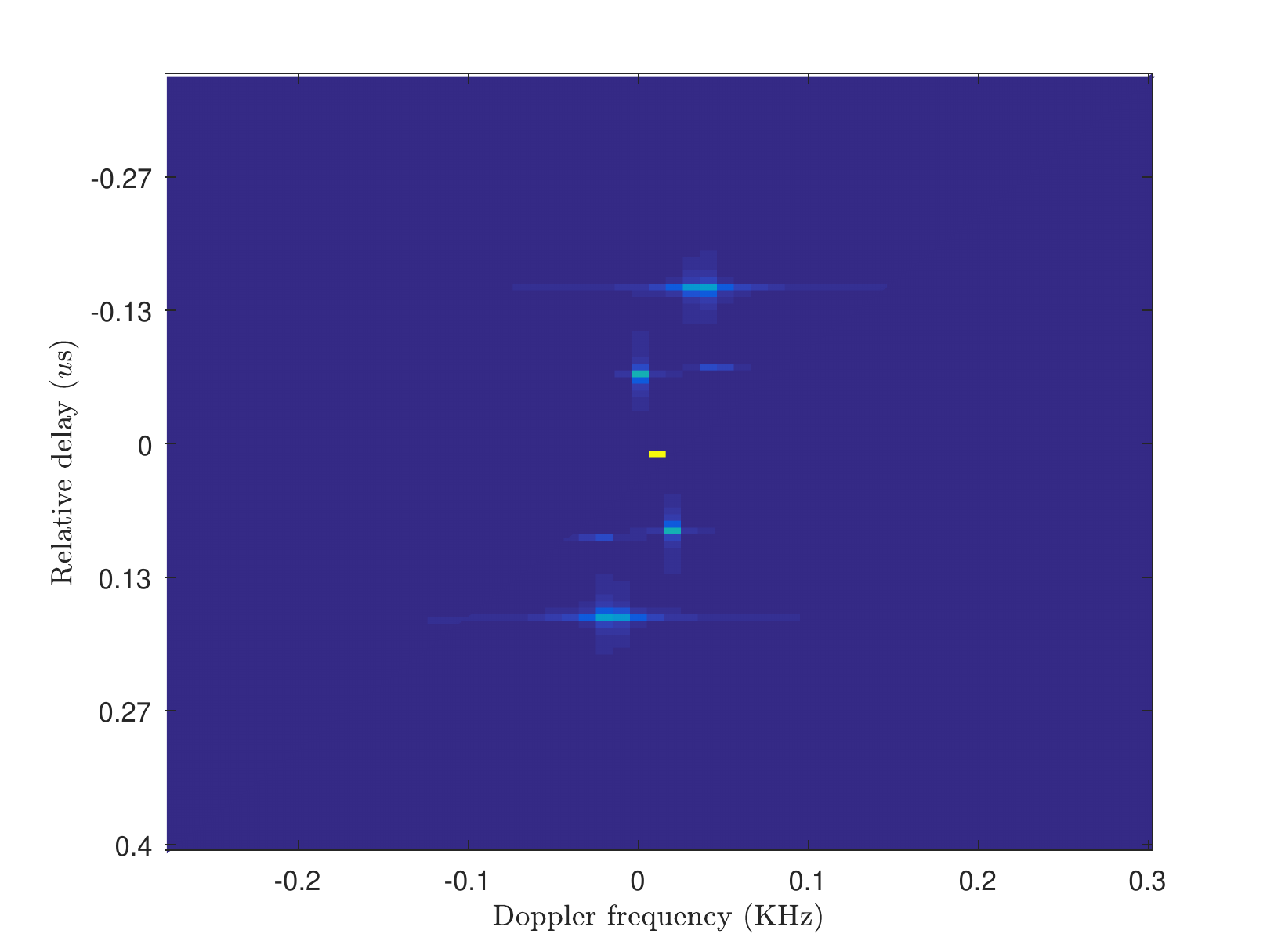}
    \caption{An example of the 2D spectrum of $\rho_n[m,g]$.}
    \label{Fig_1}
\end{figure}

\begin{figure}[ t]
    \centering
    \includegraphics[width=0.9\columnwidth]{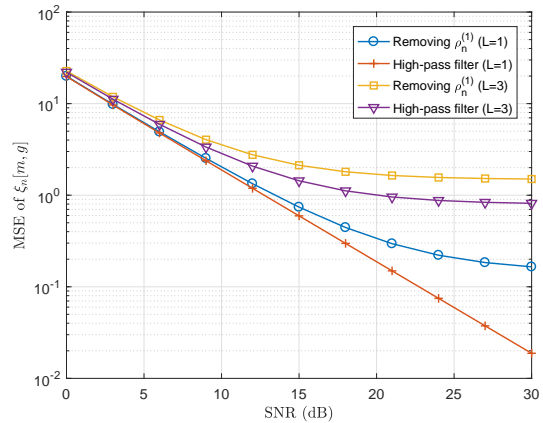}
    \caption{MSE of $\xi_n[m,g]$ versus SNR and number of paths.}
    \label{Fig_2}
\end{figure}
\begin{figure}[t]
	\centering
    \includegraphics[width=0.9\columnwidth]{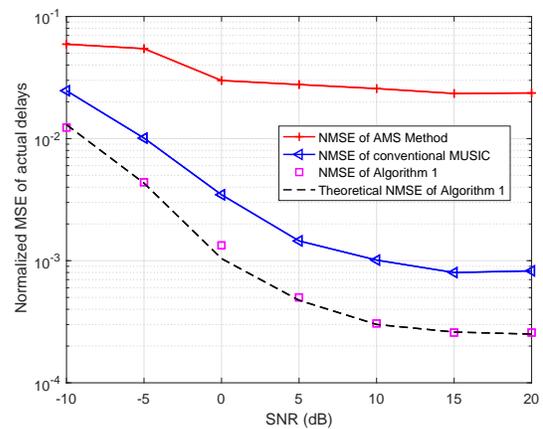}
    \caption{NMSE of the estimates for the actual delays versus SNR.}
    \label{Fig_3}
\end{figure}
\begin{figure}[t]
	\centering
    \includegraphics[width=0.9\columnwidth]{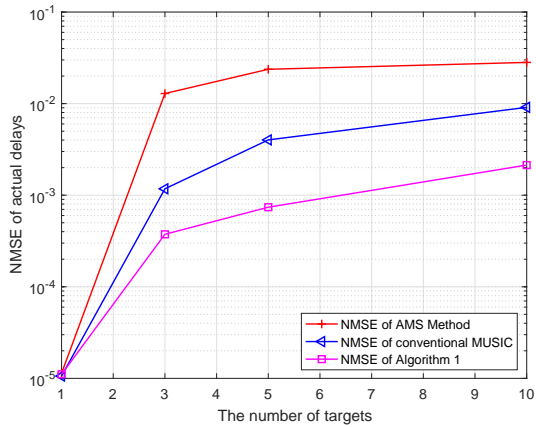}
    \caption{NMSE of the estimates for the actual delays versus $Q$. The SNR is fixed at $20$ dB.}
    \label{Fig_4}
\end{figure}

Fig. \ref{Fig_1}  illustrates an example of the spectrum of the cross-correlated signals $\rho_1[m,g]$ by performing the 2D-FFT over $m$ and $g$. The x-axis of the plot denotes the Doppler frequency, $\bar f_{{\rm D},l}$, and the y-axis denotes the relative delay, $\bar\tau_l$. In order to see the targets clearly, we truncate the absolute value of the spectrum by half. At the origin of the spectrum plot, we see that there is a rectangular spot with the highest brightness. This spot denotes the low-pass component of $\rho_n[m,g]$ which does not contain targets' information of interest. Above the origin, there are three blue spots.  We also see that there are three spots at the bottom of the figure, which denote the side products caused by cross-correlation. The side products also contain the parameters of targets with the opposite signs. This figure verifies the low-pass and the mirrored high-pass components of the CACC outputs as we described in Section III. It clearly shows that a high-pass filter can be applied to remove the non-desired low-pass component.

Fig. \ref{Fig_2} presents the MSE of $\xi_n[m,g]$, defined as $|\hat\xi_n[m,g]-\xi_n[m,g]|^2$. The MSE of $\xi_n[m,g]$ reflects the accuracy of the constructed high-pass signals and directly impacts the following sensing parameter estimation. Two methods are tested to filter out the low-pass component. One is a Butterworth filter with the cut-off frequency of $(\omega_\tau,\omega_f)=(\frac{\pi}{128},\frac{\pi}{128})$. The other one is via removing the component of the LOS path, i.e., $\rho_n^{(1)}$, over a short period from itself. It is clear that the Butterworth filter outperforms the method of removing the LOS path for either $L=1$ or $L=3$. It is worth pointing out that the MSE of the Buttworth filter drops linearly with the SNR increasing for $L=1$ target. This indicates that the input error of $\xi_n[m,g]$ can be sufficiently small when there is only one target. This is because, when $L=1$, $\rho_n^{(2)}[m,g]$ is also a low-pass component, and when there are multiple targets, the MSE approaches to a fixed level that is the mean power of $\rho_n^{(2)}[m,g]$.

Next, we present the estimation performance for sensing parameters. Since the Doppler frequencies are obtained by the same algorithm as delays, we mainly present the simulation results for delays and AoAs next.

Fig. \ref{Fig_3} illustrates the normalized MSE (NMSE) for the estimates of the actual propagation delays, i.e., $|\hat \tau_l-\tau_l|^2/T^2$.  The benchmark algorithms for comparison include the AMS method in \cite{widar2.0} and conventional MUSIC. The system setup is the same as that in Fig. \ref{Fig_1}. For our proposed Algorithm 1, the parameter $Q$, which defines the length of the mirrored vector in \eqref{qq}, is set to $128$. We see that our proposed algorithm outperforms the other two methods significantly. The AMS algorithm shows a large NMSE, possibly due to its inefficiency in dealing with multiple paths. Our proposed algorithm achieves lower NMSE than the conventional MUSIC because it removes the mirrored side products and reduces the rank of the signal space. An error floor can be observed for both our proposed Algorithm \ref{Alg_1} and the conventional MUSIC. This is caused by the error of $\xi_n[m,g]$, which cannot be removed by increasing SNR. We also plot the theoretical NMSE of the proposed mirrored-MUSIC. The NMSE of the proposed Algorithm \ref{Alg_1} matches the theoretical NMSE tightly. Our proposed algorithm can achieve better performance by using larger bandwidth since the bandwidth has a significant impact on sensing performance and mainly influences the time resolution.  With the used bandwidth increasing, the time duration of each symbol is reduced and the time resolution is improved.

Fig. \ref{Fig_4} shows how the NMSE of delay estimates varies with $L$. The system setups are the same as those in Fig. 3, except that the SNR is fixed at $20$ dB. The number of targets, $L$, is chosen from $1$ to $10$. We compare our proposed mirrored-MUSIC with the conventional MUSIC and the AMS method. It is noted that, when $L$ is 1, the NMSEs for all these methods are nearly the same. When $L$ ranges from $3$ to $10$,  our proposed mirrored-MUSIC can achieve much lower NMSE than the other two methods. The NMSE increases with the number of targets while the growth rate drops, because the delays of multiple targets become closer to each other and  can be separated into several groups.

\begin{figure}[t]
	\centering
    \includegraphics[width=0.9\columnwidth]{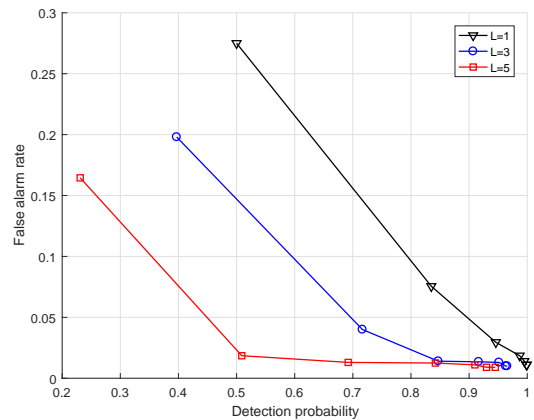}
    \caption{ROC versus SNR and the number of targets. Each mark on one curve is for one SNR value from -10 dB (left) to 20 dB (right), at a step of 5 dB.}
    \label{Fig_5}
\end{figure}
\begin{figure}[t]
    \centering
    \includegraphics[width=0.9\columnwidth]{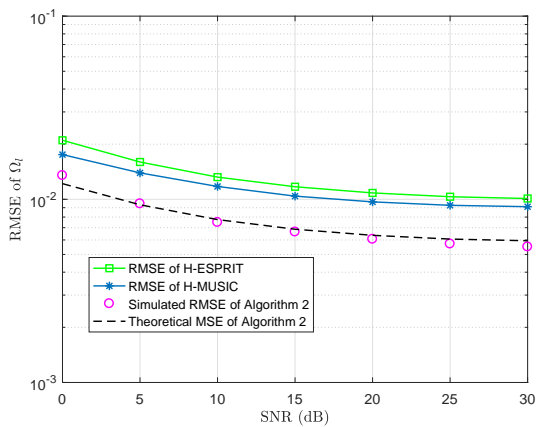}
    \caption{RMSE of the AoA estimates versus SNR.}
    \label{Fig_6}
\end{figure}

Fig. \ref{Fig_5} presents the receiver operating characteristic (ROC) of the proposed Algorithm 1 after matching the estimates for delays and Doppler frequencies. The x-axis measures the detection probability and the y-axis measures the false-alarm rate. We use a common threshold of NMSE, $10^{-3}$, to detect whether the estimated values are effective ones. If the NMSE is smaller than $10^{-3}$, the estimated values are seen as effective values, and the targets are regarded as detected. We increase the SNR from $-10$ dB to $20$ dB, at an interval of 5 dB, to observe the variance of the ROC. According to the figure, the detection probability rises with the SNR increasing and approaches to one, while the false-alarm rate shows an opposite trend and approaches to zero. With the number of targets, $L$, increasing, the detection probability drops and the false-alarm rate grows, which indicates that the performance declines. We can also see that the sum of detection probability and the false-alarm rate is less than 1, which is because one target may have more than one estimate.

Fig. \ref{Fig_6} shows the RMSE of the AoA estimation versus SNR. Our proposed Algorithm \ref{Alg_2} is compared with the H-MUSIC method and the H-ESPRIT method from \cite{high_reso}. The system setup is the same as that in Fig. \ref{Fig_1}. Noting that the H-MUSIC and the H-ESPRIT are based on hybrid arrays, we simplified their methods by letting each hybrid array has one receiving antenna, which is equivalent to a fully-digital array. For the initialization of Algorithm \ref{Alg_2}, we let $C=128$ and $C_1=10$, and use the estimated delays obtained from Algorithm \ref{Alg_1}.   Our algorithm outperforms H-MUSIC and H-ESPRIT, since our proposed Algorithm \ref{Alg_2} can better address the ambiguity of basis vectors as we discussed in the section V.
We see that the RMSE of AoAs drops slighly with SNR increasing from $15$ dB to $30$ dB, and the simulated RMSE matches well with the theoretical values. To further decrease the RMSE, we can increase the bandwidth to increase the accuracy of all estimates.

\section{Conclusion}\label{sec-conc}
We have proposed an uplink sensing scheme for JCAS PWNs, which can achieve high-accuracy sensing parameter estimation with asynchronous transceivers and a small number of receiving antennas. We extend the CACC methods to mitigate the timing and frequency ambiguity. We then propose a mirrored-MUSIC algorithm to efficiently handle the CACC outputs with equivalently doubled unknown sensing parameters, at a complexity lower than that of the conventional MUSIC algorithm. Simulation results demonstrate that our proposed mirrored-MUSIC can effectively estimate the actual values of delay and Doppler frequency. Using the estimates of delay and Doppler frequency, we then obtain high-resolution AoAs estimates with a small number of receiving antennas. This is achieved via an improved MUSIC algorithm that combines measurements from spatial, temporal, and frequency domains. Our scheme enables radar sensing to be effectively implemented in mobile networks using the uplink channel and requires little modifications on infrastructure or advanced hardware, such as a full-duplex transceiver. The results show that the proposed scheme of uplink parameter estimation outperforms the state of the arts and can accurately detect multiple targets for the JCAS PWNs.

Our proposed mirrored-MUSIC algorithm can be applied to other problems with mirrored signals, such as the general harmonic retrieval problem with sinusoidal modulations. Its basic idea can also be extended to other spectral analysis techniques such as ESPRIT and the matrix pencil method. The proposed high-resolution AoA estimation can also be applied to other problems involving a similar combination of multi-domain measurements.

\begin{appendices}
\section{Proof of Theorem \ref{T1}}\label{T1proof}
We write $\rho_n^{(1)}[m,g]$ as
\begin{align}
\rho_n^{(1)} &=D_n[m,g]D_0^H[m,g]=  |\alpha_0|^2 e^{jn\p_0}.
\end{align}
It is noted that $\rho_n^{(1)}$ is invariant with $m$ and $g$.

We express the sum of $\rho^{(3)}_n[m,g]$ and $\rho^{(4)}_n[m,g]$ as
\begin{align}
&\rho_n^{(3)}[m,g]+\rho_n^{(4)}[m,g]\notag\\
           =&\sum\limits_{l=1}^L\alpha_0\alpha_l^He^{j2\pi mT_{\rm A}(-f\dop)}e^{-j\frac{2\pi g}{T}(\tau_0-\tau_l)}e^{jn\p_0} +\notag\\
         &  \sum\limits_{l=1}^{L}\alpha_l\alpha_0^He^{j2\pi mT_{\rm A}f\dop}e^{-j\frac{2\pi g}{T}(\tau_l-\tau_0)}e^{jn\p_l}.
\end{align}
The 2D-FFT of $\rho_n^{(3)}[m,g]+\rho_n^{(4)}[m,g]$ is given by
\begin{align}
&{\rm FFT}\left(\rho_n^{(3)}[m,g]+\rho_n^{(4)}[m,g]\right)\notag\\
           =&\alpha_0\alpha_l^He^{jn\p_0}\frac{\sin\left(MT_{\rm A}f\dop+m'\right)\pi}{\sin\left(T_{\rm A}f\dop+\frac{m'}M\right)\pi}\frac{\sin\left(G\frac{\tau_0-\tau_l}{T}+g'\right)\pi}{\sin\left(\frac{\tau_0-\tau_l}{T}+\frac{g'}G\right)\pi}+\notag\\
           &\alpha_l\alpha_0^He^{jn\p_l}\frac{\sin\left(MT_{\rm A}f\dop-m'\right)\pi}{\sin\left(T_{\rm A}f\dop-\frac{m'}M\right)\pi}\frac{\sin\left(G\frac{\tau_0-\tau_l}{T}-g'\right)\pi}{\sin\left(\frac{\tau_0-\tau_l}{T}-\frac{g'}G\right)\pi},
\end{align}
and has an impulsive shape with the peak points at
\begin{align}
(\omega_f,\omega_\tau)=\pm (\pi T_{\rm A} f\dop ,\pi(\tau_0-\tau_l)/T).
\end{align}
Note that $m'$ and $g'$ are integers, while $\omega_f$ and $\omega_\tau$ are continuous values ranging from $-\pi$ to $\pi$.

As for $\rho_n^{(2)}[m,g]$, it is written as
\begin{align}
 &\rho_n^{(2)}[m,g]\notag\\
 =&I_n[m,g]I_0^H[m,g]\notag\\
                  =&\big(\sum\limits_{l=1}^{L} \alpha_l e^{jn\p_l}e^{j2\pi mT_{\rm A}(f\dop+\delta_f(m))}e^{-j \frac{2\pi g}{T}(\tau_l+\delta_\tau(m))}\big)\times\notag\\
                  &\big(\sum\limits_{l=1}^{L} \alpha^H_l e^{-j0\p_l}e^{-j2\pi mT_{\rm A}(f\dop+\delta_f(m))}e^{j\frac{ 2\pi g}{T}(\tau_l+\delta_\tau(m))}\big)\notag\\
                  =&\sum\limits_{l=1}^L|\alpha_l|^2e^{jn\p_l}+ \sum\limits_{l=1}^L\sum\limits_{x\neq l}^L \alpha_l\alpha_x^He^{jn\p_l}e^{j2\pi mT_{\rm A}f_{l,x}}e^{-j\frac{2\pi g}{T}\tau_{l,x}}\notag\\
                  \triangleq& \bar\rho^{(2)}+\tilde \rho^{(2)}[m,g],
\end{align}
where $f_{l,x}=f_{{\rm D},l}-f_{{\rm D},x}$ and $\tau_{l,x}=\tau_l-\tau_x$. It is noted that $\bar\rho^{(2)}$ is an invariant component and $\tilde \rho^{(2)}[m,g]$ is a variant component. The 2D-FFT of $\tilde\rho^{(2)}[m,g]$ also has an impulsive shape, but the power of $\tilde\rho^{(2)}[m,g]$ is  significantly lower than other components  and can be  neglected.

\section{The Error of $\xi_n[m,g]$}\label{MSEofXI}
 The  components of  $\rho_n^{(1)}$ and  $\bar\rho_n^{(2)}$ are invariant with $m$ and $g$ and can be largely suppressed after high-pass filtering. Without the noise term, the error of $\xi_n[m,g]$ is mainly caused by the variant part of $\tilde\rho_n^{(2)}[m,g]$. Hence, we can define the statistical mean error of $\xi_n[m,g]$ as
\begin{align}
{\delta}_{\xi}={\rm var}(\tilde\rho_n^{(2)}[m,g]).
\end{align}

As for the AMS method, it assumes that $|\alpha_0|\gg|\alpha_l|$ and estimates $D_n[m,g]$ as the mean value of $y_n[m,g]$, which is denoted as $\hat D_n[m,g]$. Then, the AMS method obtains two signals. One is $A_n[m,g]=y_n[m,g]-\hat D_n[m,g]=\hat I_n[m,g]$ and the other one is $B_n[m,g]=y_n[m,g]+\hat D_n[m,g]\approx2\hat D_n[m,g]+\hat I_n[m,g]$. The cross correlated signal  of the AMS method is expressed as
\begin{align}
\xi_n^{\rm AMS}[m,g]&=A_n[m,g]B_0^H[m,g]\notag\\
&\approx  I_n[m,g](2D_0[m,g]+I_0[m,g])\notag\\
&\approx2\rho_n^{(4)}[m,g]+\rho^{(2)}_n[m,g].
\end{align}
The AMS method uses $2\rho_n^{(4)}[m,g]$ as output to conduct parameter estimation. The error in the AMS method is up to $\rho^{(2)}_n[m,g]$, which is larger than the $\xi_n[m,g]$ used in our method without by-product suppression. More importantly,   $D_n[m,g]$ and $I_n[m,g]$ in the AMS method are  approximated values, which would result in extra errors. Therefore, the AMS method causes larger $\xi_n[m,g]$ errors than our method in \eqref{cros}.

\section{The Variance of $\bf\Psi$}\label{VarPsi}
The dimension of ${\bf \Psi}$ is $(P+1)\times(M-P)$. The expectation of entries of $\bf\Psi$ can be approximately as
\begin{align}\label{EPsi}
&{\mathbb E}[{\bf\Psi}]_{p_i,m_j}\notag\\
=&{\mathbb E}\big[\rho_{n_0}[m_j+p_i,g_0]+\rho_{n_0}[m_j+P-p_i,g_0]-2\bar\rho_{n_0}\notag\\
&-\rho^{(3)}_{n_0}[m_j+p_i,g_0]-\rho^{(4)}_{n_0}[m_j+p_i,g_0]\notag\\
&-\rho^{(3)}_{n_0}[m_j+P-p_i,g_0]-\rho^{(4)}_{n_0}[m_j+P-p_i,g_0]\big]\notag\\
\approx &2{\mathbb E}\left[\tilde\rho_{n_0}^{(2)}[m,g_0]\right]+2{\mathbb E}\left[D_{n_0}[m,g_0]+I_{n_0}[m,g_0]\right]{\mathbb E}[z_{n_0}[m,g_0]]\notag\\
=&2{\mathbb E}\left[\tilde\rho_{n_0}^{(2)}[m,g_0]\right]+2{\mathbb E}\left[\rho_n^{(1)}+\bar\rho_n^{(2)} \right]{\mathbb E}[z_{n_0}[m,g_0]]\notag\\
=&2{\mathbb E}\left[\tilde\rho_{n_0}^{(2)}[m,g_0]\right]+2 \sum\limits_{l=0}^L|\alpha_l|^2e^{jn_0\p_l}{\mathbb E}[z_{n_0}[m,g_0]],
\end{align}
where the definition of $\tilde\rho_{n}^{(2)}[m,g]$ and $\bar\rho_n^{(2)}$ can be referred to Appendix \ref{T1proof}. The first term in the end of  \eqref{EPsi}, $2{\mathbb E}\left[\tilde\rho_{n_0}^{(2)}[m,g_0]\right]$,  is the interference with variance of $\delta_\xi$ after conducting the high-pass filter,   and the second term in \eqref{EPsi} denotes the noise term after CACC. Hence, the variance of each entry of ${\bf\Psi}$ is
\begin{align}
&{\rm var}\big([{\bf\Psi}]_{p_i,m_j}\big)\notag\\
=&{\rm var}\left[2\tilde\rho_{n_0}^{(2)}[m,g_0]\right]+  {\rm var}\left[2\sum\limits_{l=0}^L|\alpha_l|^2e^{jn_0\p_l}z_{n_0}[m,g_0]\right]\notag\\
=&4\delta_\xi+4\left|\sum\limits_{l=0}^L|\alpha_l|^2e^{jn_0\p_l}\right|^2\sigma^2\notag\\
\triangleq & 4\delta_\xi + 4 \delta_{n_0} \sigma^2.
\end{align}
To minimize the variance of  each entry of ${\bf\Psi}$, $\delta_{n_0}$ needs to be minimized. Therefore, the optimal selected index of receiving antenna, $n_0$, satisfies that $\left|\sum\limits_{l=0}^L|\alpha_l|^2e^{jn_0\p_l}\right|^2$ is minimized.
\section{Proof of \eqref{bac1}}\label{AppB}
From \eqref{xxx}, the basis vectors of ${\bf c}[m,g]$ are given by ${\bf a}(\p_{l'})$. It is noted that
\begin{align}
{\bf c}[m,g+c]=\sum\limits_{l'=-L,l\neq0}^{L}\psi_{l'}[m,g]{\bf a}(\p_{l'})e^{-j c\bar\tau_{l'}},
\end{align}
where $\psi_{l'}[m,g]=P_{l'}e^{j m \bar f_{{\rm D},l'}}e^{-j g\bar\tau_{l'}}$ denotes the weighting factor. Hence, the basis vectors of ${\bf c}[m,g+c]$ only have a phase shift of $-c\bar\tau_{l'}$ compared with those of ${\bf c}[m,g]$. We denote the first column of ${\bf C'}[m,g]$ as ${\bf c}_1[m,g]$. Then, we have
\begin{align}
 &{\bf c}_1[m,g]\notag\\
 =&\big[{\bf c'}^T[m,g],\cdots,{\bf c'}^T[m,g+C-1]\big]^T\notag\\
=&\sum\limits_{l=-L}^{L}\psi_{l}[m,g]\big[{\bf a}^T(\p_l)e^{-j 0\bar\tau_l},\cdots,{\bf a}^T(\p_l)e^{-j (C-1)\bar\tau_l}\big]^T.
\end{align}
Therefore, the basis vectors for ${\bf c}_1[m,g]$  are given by $[{\bf a}^T(\p_{l'}),\cdots,{\bf a}^T(\p_{l'})e^{-j (C-1)\bar\tau_{l'}}]^T$. Likewise, the basis vectors for the second column of ${\bf C}'[m,g]$ have the same expression with replacing $\bar\tau$ by $\bar f_{\rm D}$.
\end{appendices}

\bibliographystyle{IEEEtran}

\end{document}